\newtheorem{definition}{Definition}[section]
\newtheorem{remark}{Remark}[section]
\newtheorem{theorem}{Theorem}[section]
\newtheorem{lemma}{Lemma}[section]
\newtheorem{proposition}{Proposition}[section]
\newcommand{\eqdef} {\mbox{$\:\stackrel{\triangle}{=}\:$}}
\newcommand{\cI}{{\mathcal I}}
\newcommand{\cN}{{\mathcal N}}
\newcommand{\cM}{{\mathcal M}}
\newcommand{\cS}{{\mathcal S}}
\newcommand{\cT}{{\mathcal T}}
\def\b0{\mathbf{0}}
\def\bT{\mathbf{T}}
\def\bA{\mathbf{A}}
\def\bB{\mathbf{B}}
\def\bU{\mathbf{U}}
\def\bD{\mathbf{D}}
\def\bE{\mathbf{E}}
\def\bF{\mathbf{F}}
\def\bQ{\mathbf{Q}}
\def\bV{\mathbf{V}}
\def\bI{\mathbf{I}}
\def\bH{\mathbf{H}}
\def\bW{\mathbf{W}}
\def\bP{\mathbf{P}}
\def\bM{\mathbf{M}}
\def\bbR{\mathbb{R}}
\def\bx{\mathbf{x}}
\def\by{\mathbf{y}}
\def\bGam{\mathbf{\Gamma}}
\def\bgam{\mathbf{\gamma}}
\def\bz{\mathbf{z}}
\def\ba{\mathbf{a}}
\def\bv{\mathbf{v}}
\def\bs{\mathbf{s}}
\def\bbe{\mathbf{e}}
\def\bw{\mathbf{w}}
\def\bL{\mathbf{\Lambda}}
\def\bS{\mathbf{\Sigma}}
\def\bm{\mathbf{\mu}} 
\def\trQ{\textrm{Q}} 
\def\bzero{\mathbf{0}}
\newcommand{\argmax}{\operatornamewithlimits{argmax}}
\newcommand{\be}{\begin{equation}}
\newcommand{\ee}{\end{equation}}
\newcommand{\bea}{\begin{eqnarray}}
\newcommand{\eea}{\end{eqnarray}}
\newcommand{\bean}{\begin{eqnarray*}}
\newcommand{\eean}{\end{eqnarray*}}
\newcommand{\ben}{\begin{enumerate}}
\newcommand{\een}{\end{enumerate}}
\newcommand{\qed}{\hspace*{\fill}%
    \vbox{\hrule\hbox{\vrule\squarebox{.667em}\vrule}\hrule}\smallskip}
    \def\squarebox#1{\hbox to #1{\hfill\vbox to #1{\vfill}}}
\renewcommand{\theequation}{\arabic{section}.\arabic{equation}}
\newcommand{\Section}[1]{\section{#1}
\setcounter{equation}{0}}
\title{On Detection With Partial Information In The Gaussian Setup}
\author{Onur \"{O}zye\c{s}il,  M. K{\i}van\c{c} M{\i}h\c{c}ak, Y\"{u}cel Altu\u{g}
\thanks{
O. \"{O}zye\c{s}il is with PACM (the Program in Applied and Computational Mathematics), 
Princeton University, Princeton, NJ, 08544, 
{\tt\footnotesize{oozyesil@princeton.edu}}; 
M.~K.~Mihcak is with the Electrical and Electronics Engineering 
Department of Bo\u{g}azi\c{c}i University, Istanbul, 34342, Turkey,
{\tt\footnotesize{kivanc.mihcak@boun.edu.tr}};
Y. Altu\u{g} is with the School of Electrical and Computer Engineering, Cornell University, Ithaca, NY, 14853, 
{\tt\footnotesize{ya68@cornell.edu}} }
\thanks{M.~K.~M{\i}h\c{c}ak is partially supported by
T\"{U}B\.{I}TAK Career Award no. 106E117 and T\"{U}BA-GEBIP Award; 
during the time of research for this project, O.~\"{O}zye\c{s}il 
was partially supported by T\"{U}B\.{I}TAK Graduate Studies Fellowship, no. 2228, and
Y.~Altu\u{g} was partially supported by T\"{U}B\.{I}TAK Career
Award no. 106E117.}
}
\begin{document}

\maketitle
\thispagestyle{empty}
\pagestyle{empty}

%%%%%%%%%%%%%%%%%%%%%%%%%%%%%%%%%%%%%%%%%%%%%%%%%%%%%%%%%%%%%%%%%%%%%%%%%%%%%%%%
\begin{abstract}
We introduce the problem of communication with partial information, where there is an asymmetry 
between the transmitter and the receiver codebooks. 
Practical applications of the proposed setup include the robust signal hashing problem 
within the context of multimedia security and asymmetric  communications with resource-lacking receivers. 
We study this setup in a binary detection theoretic 
context for the additive colored Gaussian noise channel.
In our proposed setup, the partial information available at the detector consists of 
dimensionality-reduced versions of the
transmitter codewords, where the dimensionality reduction is achieved via a linear transform. 
We first derive the corresponding MAP-optimal 
detection rule and the corresponding conditional probability of error (conditioned on the partial information
the detector possesses). 
Then, we constructively quantify an optimal class of linear transforms, where the 
cost function is the expected Chernoff bound on the conditional probability of error of the MAP-optimal 
detector. 
\end{abstract}

\Section{Introduction}
\label{sec:intro}

In this paper, we introduce a communication-theoretic paradigm, which 
we name as ``communication with partial information'', and subsequently study 
it within a detection-theoretic context (therefore the term ``detection with partial information'') 
in a particular case of the Gaussian setup.
In the proposed paradigm, there is an inherent asymmetry between the information the 
transmitter and the receiver possess in terms of the utilized codebooks.
In particular, in the ``detection with partial information'' setup, 
the codebook of the receiver is formed via applying a non-invertible process
on the codebook of the transmitter; hence {\em the codebooks are different}. 
Thus, the information available at the transmitter forms a ``superset''
of the information available at the receiver. 
Note that, a reminiscent asymmetric structure between the 
transmitter and the receiver also exists in the well-known family of problems, 
termed as  ``communication with side information'' \cite{Sha58, Wyn75, Wyn76, Cos83}. 
However, in the paradigm of ``communication with side information'' (unlike the proposed 
``communication with partial information'' setup), the utilized codebooks
at the receiver and the transmitter are the same; in addition, either the transmitter
or the receiver is ``favored'' with the presence of ``extra'' information (which amounts to 
the ``side information'').

It appears that, there are at least two significant applications that motivate the formulation
of the ``communication with partial information'' approach:
\begin{itemize}
\item The first application 
can be viewed to fall within the category of ``robust signal hashing'' in the signal 
processing \& multimedia security literature \cite{Ven2000, Mih2001, Koz2002, Mon07}. 
In robust signal hashing,   a content owner provides
``robust  hash value''s of the protected content (that is some dimensionality-reduced versions 
of the protected content) to a third party, which
searches the content using its robust hash values as {\em the partial information} at the receiver end. 
These robust   hash values represent ``the content's significant features'' and are
ideally approximately-invariant under acceptable modifications to
the content. In practical applications, the third party that performs the hash-based search is
usually {\em not trusted}; hence, there is a significant issue of privacy. 
In particular, given a robust hash value, it should ideally be impossible to retrieve the 
original protected content from a privacy viewpoint. 
The setup proposed in this paper can be used as a detection-theoretic model to analyze the 
hash-based detection problem: 
the protected content is represented by the transmitted signal;
the robust hash values used in the search are represented by the partial information available
at the receiver; a perceptually-acceptable modification to the protected content 
is represented by the channel noise. 
\item The second application includes all instances of point-to-point communications, 
where there is an inherent asymmetry between the transmitter and the receiver
in terms of their storage capabilities and computational resources. In particular, the cases,
when the receiver is unable to store the codebook used by the encoder (due to a limit on the 
memory) or utilize the codebook used by the encoder (due to a limit on the computational 
resources), can be studied within the framework of ``communication with partial information''. 
In such cases, one potential remedy is the receiver's using a ``simplified'' (i.e., dimensionality-reduced) 
version of the codebook of the encoder. In practice, such situations may typically arise, for instance,  
when there is a bi-directional communication between a sensor and the base station (the resource-limited
receiver representing the sensor)  or when there is a bi-directional communication between 
a controller and a remote measurement unit. In such applications, the simplified version of the encoder
codebook is represented by the partial information at the receiver side. 
\end{itemize}

Our contributions in this paper can be listed as follows: 
\begin{itemize}
\item We introduce the paradigm of ``communication with partial information'' and study it within the context of 
binary detection in the Gaussian setup. We believe the main philosophy behind this formulation 
(i.e., introducing an asymmetry between the transmitter and the receiver in the sense of utilized codebooks) 
can be used to analyze various problems of interest in communication theory and signal processing. 
\item Within the binary hypothesis testing setup, we study a case, where the disturbance on the transmitter output consists of additive colored Gaussian noise, and the detector partial information is produced
via applying a linear (dimensionality-reducing) transform on the encoder codebook. 
Consequently, we present the following results: 
\begin{itemize}
\item We derive the MAP-optimal detection rule and the corresponding probability of error, both of which are conditioned
on the partial information available at the detector. 
\item We construct a class of {\em optimal} linear transforms, which minimize the expected 
(with respect to the joint distribution of the detector partial information) 
Chernoff bound on the aforementioned probability of detection error. 
\end{itemize}
\end{itemize}

In Sec.~\ref{sec:notation-problem}, we present the notation that is used throughout the paper and specify the formal
problem statement. In Sec.~\ref{sec:detection-partial}, we derive the MAP-optimal detection rule conditioned 
on the partial information available at the receiver. In Sec.~\ref{sec:opt-linear-expectation}, we quantify
an optimal (in the sense of the expected value of the Chernoff bound on the detection error probability) 
class of linear transforms that are used to generate the receiver partial information.  
We present illustrative numerical results in Sec.~\ref{sec:numerical-results}, followed by 
discussions and conclusion in Sec.~\ref{sec:conclusions}.

\Section{Notation and Problem Statement}
\label{sec:notation-problem}

\subsection{Notation}
\label{ssec:notation}

Boldface lowercase and uppercase letters denote vectors and matrices, respectively; the 
corresponding regular letters with subscripts denote their individual elements. For instance, 
given a vector $\ba$, $a_i$ represents its $i$-th element; given a matrix $\bA$, $A_{ij}$
denotes its $\left( i , j \right)$-th element. Note that, we do not use a separate notation for random 
vectors; we assume that it is clear from the context. 

Given a matrix $\bA$, $\bA^T$, $r \left( \bA \right)$ and $\det(\bA)$ denote its
transpose, rank and determinant, respectively; further, $\bI_n$ denotes the
identity matrix of size $n \times n$. Given the vectors $\bx , \by
\in \bbR^m$, $\langle \bx,\by \rangle$ indicates the inner product
that induces the Euclidean norm, i.e., $\langle \bx,\by \rangle = \sum_i x_i y_i$; 
accordingly the induced Euclidean norm is denoted by $\| \bx \| = \langle \bx , \bx \rangle^{1/2}$.

\begin{definition}
Given $\bA \in \bbR^{m \times n}$, such that $r\left(\bA\right)=k \leq \min \left( m , n \right)$,
\emph{Singular Value Decomposition} (SVD) of $\bA$ is unique (up to ordering) and defined as
\begin{equation}
\bA \eqdef \bU \bL \bV^T,
\label{eq:eq-SVD}
\end{equation}
where $\bU \in \bbR^{m \times k}$, $\bV \in \bbR^{n \times k}$, $\bL
\in \bbR^{k \times k}$ are called the left-singular vector matrix (orthonormal),
the right-singular vector matrix (orthonormal) and the singular value matrix of
$\bA$, respectively. The matrix $\bL$ is positive-definite diagonal;
we denote its entries along the diagonal by $\left\{ \sigma_i \left( \bA \right) \right\}_{i=1}^k$, 
which are the non-zero singular values of $\bA$, and assumed to be in non-increasing order
without loss of generality.  
% and these diagonal
%elements are termed as the singular values of $\bA$.
\label{def:def-SVD}
\end{definition}

%Given the matrix $\bA \in \bbR^{m \times k}$ of rank $r \leq \min \left( m , k
%\right)$, $\left\{ \sigma_i \left( \bA \right) \right\}_{i=1}^r$ denote its non-zero
%singular values, which are assumed to be in non-increasing order without loss of generality. 
%%%%%%%%%%%%%%%%%%%%%%%%%%%%%%%%%%%%%%%%%%%%%%%%
For a square matrix $\bA$ of size $k \times k$ and of rank $r \leq k$, 
$\left\{ \lambda_i  \left( \bA \right) \right\}_{i=1}^r$  denote its non-zero eigenvalues; 
in case $\bA$ is a symmetric matrix, $\left\{ \lambda_i \right\}$ are assumed to be 
in non-decreasing order. 
We use $\cN \left( \bm, \bS \right)$ to denote a
multivariate Gaussian distribution, with mean vector $\bm$ and
covariance matrix $\bS$. Furthermore, $\trQ \left( \cdot \right)$
denotes the standard $Q$-function:  $\trQ \left(
\alpha \right) \eqdef \int_{\alpha}^{\infty} \frac{1}{\sqrt{2 \pi}}
e^{-x^2/2} dx$.

\subsection{Problem Statement}
\label{ssec:problem}

We analyze a binary communication system, where the encoder selects one of the two codewords, 
$\bx_0$ and $\bx_1$, representing the message bit $i \in \left\{ 0 , 1 \right\}$, where $\Pr \left( i = 0 \right) = 
\Pr \left( i = 1 \right) = 1/2$; the selected codeword, $\bx = \bx_i$, is sent through a channel. 
The encoder output $\bx$ is  corrupted by an additive, signal-independent, 
(not necessarily white) Gaussian noise, denoted by $\bbe$, thereby yielding the overall channel output $\by$. Observing $\by$,
the receiver acts as a detector and makes a binary decision, as to the origins of received signal.
We pursue a detection-theoretic approach to solve this problem and assume uniform costs.  
We assume that $\bx_0$, $\bx_1$,  $\bbe$, and $\by$ are all length-$n$ real-valued vectors, where
$\bx_0$ and $\bx_1$ are independent of each other and 
$\bx_0, \bx_1 \sim \cN \left( \bzero , \bS_x \right)$, $\bbe \sim \cN \left( \bzero , \bS_e \right)$ is independent
of both $\bx_0$ and $\bx_1$. Here, we also assume that the covariance matrix of the original signals $\bS_x$ and the covariance matrix of the noise $\bS_e$ are {\em positive definite} (they are also symmetric by construction).
See Fig.~\ref{fig:system-diag} for a schematic illustration of the proposed problem. 

\begin{figure}[!htb]
\centering \centerline{\epsfxsize = 3.5in
\epsfbox{./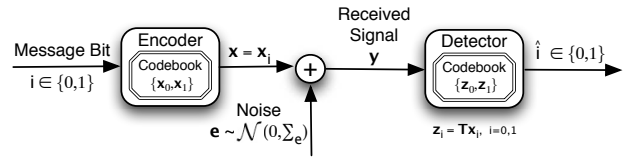}} \caption{Block diagram representation
of the problem of ``binary detection with partial information''.} 
\label{fig:system-diag}
\end{figure}

In the considered setup, {\em the detector 
does not know the original codewords $\left\{ \bx_0 , \bx_1 \right\}$, but only their distributions and
their dimensionality-reduced versions, $\left\{ \bz_0 , \bz_1 \right\}$}, where $\bz_i = \bT \cdot \bx_i$, $i = 0 , 1$, and $\bT$
is a deterministic real matrix of size $m \times n$, $m < n$, $r \left( \bT \right) = m$. 
Note that, this implies, $\bz_0$ and $\bz_1$  are both length-$m$ real-valued vectors. 
As such, the proposed problem is radically different from the conventional binary detection scenario due to the 
{\em mismatch between the codebooks of the encoder and the detector}. 
Consequently, we term the problem at hand as ``detection with partial information'' for the Gaussian case.

An important point here is that, since the receiver fully knows the statistical characterization of the whole
system, it is able to apply the MAP decoding rule. 
In particular, in Sec.~\ref{sec:detection-partial}, we derive the MAP detection rule, which is given as a function
of the partial information $\left( \bz_0 , \bz_1 \right)$, and the corresponding conditional probability of error
(conditioned on $\bz_0$ and $\bz_1$). 
Subsequently, in Sec.~\ref{sec:opt-linear-expectation},
we derive the optimal linear transform, $\bT$, in the sense of 
the expected Chernoff bound on the conditional probability of error 
of the MAP detector.

\begin{remark}
In \cite{Mih08}, the authors study a closely-related problem, which can be viewed as the 
``deterministic variant'' of the aforementioned setup. In particular, in \cite{Mih08} 
the authors assume that the encoder codewords $\left\{ \bx_i \right\}$ are deterministic, unknown
and the subsequent analysis is based on the probability of error induced by the GLRT (generalized
likelihood ratio test) rule. On the other hand, in this paper, we assume that the encoder
codewords $\left\{ \bx_i \right\}$ are random (in particular Gaussian) and perform a MAP-based analysis. 
\end{remark}

\begin{remark}
Although the problem imposed in this paper is the binary detection case, the analysis 
can be extended to apply a ``union bound based approach'' for the $L$-ary case with little or no 
difficulty\footnote{In the $L$-ary case, the message is $\log L$ bits long; the encoder and receiver codebooks are
$\left\{ \bx_i \right\}_{i=0}^{L-1}$ and $\left\{ \bz_i \right\}_{i=0}^{L-1}$, respectively.}. 
A similar approach and discussion was provided in \cite{Mih08} for the case of deterministic $\left\{ \bx_i \right\}$. 
\end{remark}

\Section{Optimal Detection Conditioned On The Partial Information}
\label{sec:detection-partial}

At the detector side, we are given $\left\{\bz_0,\bz_1\right\}$, which yield partial information 
about the true codewords $\left\{ \bx_0 , \bx_1 \right\}$.  The {\em binary hypothesis testing} 
approach on the detector side utilizes the MAP detection rule \cite{Poor88}: It operates on 
the observed data $\by$ (generated by the process explained in Sec.~\ref{ssec:problem}), 
and makes a binary decision regarding the message bit given $\left\{ \bz_0 , \bz_1 \right\}$. 
%Since we have equal priors and uniform costs on the hypotheses, 
%the  MAP detection rule simplifies to {\em the maximum likelihood detection rule} \cite{Poor88}. 
Thus, we aim to solve the following binary hypothesis testing problem:
\bean
H_0 &  : & \by = \bx_0 + \bbe\hspace{0.1in};\hspace{0.1in} \quad \mbox{given $\left\{\bz_0,\bz_1\right\}$}, \\
H_1 &  : & \by = \bx_1 + \bbe\hspace{0.1in};\hspace{0.1in} \quad \mbox{given $\left\{\bz_0,\bz_1\right\}$} .
\eean
The corresponding MAP detection rule is given by 
\be
p \left( \by | H_0 \right) \mathop{\gtrless}_{H_1}^{H_0} 
p \left( \by | H_1 \right).
\label{eq:MLdetection-rule}
\ee
since we have equal priors and uniform costs. Note that, (\ref{eq:MLdetection-rule}) is also known as the maximum-likelihood
detection rule \cite{Poor88}. 
Note that, for all $i \in \left\{ 0 , 1 \right\}$, we have 
\[
p \left( \by \, | \, H_i \right)  
%=  p \left( \by \, \big| \, \mbox{$\bx_i$ is sent} , \, \left\{ \bz_0 , \bz_1 \right\} \right)
= p \left( \bx_i + \bbe \, \big| \, \bz_i \right) \Big|_{\bx_i + \bbe = \by}, 
\]
which implies that (\ref{eq:MLdetection-rule}) 
can be rewritten as  
\be
\left. p \left( \bx_0 + \bbe  | \bz_0 \right) \right|_{\bx_0 + \bbe = \by} \mathop{\gtrless}_{H_1}^{H_0} 
\left. p \left( \bx_1 + \bbe | \bz_1 \right) \right|_{\bx_1 + \bbe = \by}.
\label{eq:MLdetection-rule2}
\ee

\begin{theorem}
The maximum likelihood detection rule~(\ref{eq:MLdetection-rule2}) is given by 
\be
\| \bS_{ y | z }^{-1/2} \left( \by - \bm_{ y_0 | z_0 } \right) \|  \mathop{\gtrless}_{H_0}^{H_1}
\| \bS_{ y | z }^{-1/2} \left( \by - \bm_{ y_1 | z_1 } \right) \| 
\label{eq:ML_detect_rule}
\ee
The corresponding (conditional) probability of error (conditioned on $\bz_0$ and $\bz_1$) is given by
\be
%\Pr \left[ \mbox{error} \, | \, \left\{ \bz_0, \bz_1 \right\} \right] = 
P_{e | \bz_0 , \bz_1} = 
\trQ \left( \frac{ \|   \bS_{ y | z }^{-1/2} \left( \bm_{ y_0 | z_0 }  - \bm_{ y_1 | z_1 } \right) \| }{2} \right) 
\label{eq:cond_prob_err}
\ee
where, for $i \in \left\{ 0 , 1 \right\}$,
$\bm_{ y_i | z_i }   =  
\left. \textrm{E} \left( \by_i \, | \, \bz_i \right) \right|_{ \by_i = \bx_i +  \bbe} = \bS_x \bT^T \left( \bT \bS_x \bT^T \right)^{-1} \bz_i$;
$\bS_{y|z}$ is positive definite and given by 
%\begin{equation}
$
\bS_{y|z}  =  \left. \textrm{Cov} \left( \by_i \, | \, \bz_i \right) \right|_{ \by_i = \bx_i +  \bbe, \, i = 0 , 1} = 
\bS_x + \bS_e - \bS_x \bT^T \left( \bT \bS_x \bT^T \right)^{-1} \bT \bS_x  . 
$
%\label{eq:sigmay|x}
%\end{equation}
%and $\bS_{y|z}$ is positive definite.
\label{thm:cond_pro_err}
\end{theorem}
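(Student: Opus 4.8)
The plan is to exploit the fact that everything in sight is jointly Gaussian. First I would write down the joint law of the pair $(\by_i, \bz_i)$ under $H_i$, where $\by_i = \bx_i + \bbe$ and $\bz_i = \bT \bx_i$. Since $\bx_i \sim \cN(\bzero, \bS_x)$ and $\bbe \sim \cN(\bzero, \bS_e)$ are independent zero-mean Gaussians and $\bz_i$ is a linear image of $\bx_i$, the stacked vector $(\by_i^T, \bz_i^T)^T$ is zero-mean Gaussian with cross-covariance $\textrm{Cov}(\by_i, \bz_i) = E[(\bx_i + \bbe)\bx_i^T]\bT^T = \bS_x \bT^T$, and marginals $\textrm{Cov}(\by_i) = \bS_x + \bS_e$, $\textrm{Cov}(\bz_i) = \bT \bS_x \bT^T$. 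Because $\bT$ has full row rank $m$ and $\bS_x \succ 0$, the block $\bT \bS_x \bT^T$ is invertible, so the standard Gaussian conditioning formulas apply and yield exactly the claimed $\bm_{y_i|z_i} = \bS_x \bT^T (\bT \bS_x \bT^T)^{-1} \bz_i$ and $\bS_{y|z} = (\bS_x + \bS_e) - \bS_x \bT^T (\bT \bS_x \bT^T)^{-1} \bT \bS_x$. The key structural observation is that the conditional covariance depends neither on the realized value of $\bz_i$ nor on the hypothesis index $i$, so a single matrix $\bS_{y|z}$ governs both conditional laws.

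Next I would settle positive definiteness of $\bS_{y|z}$, which is what makes $\bS_{y|z}^{-1/2}$ and the whole rule well-posed. The cleanest route is probabilistic: since $\bbe$ is independent of $(\bx_i, \bz_i)$, one has $\bS_{y|z} = \textrm{Cov}(\bx_i + \bbe \mid \bz_i) = \textrm{Cov}(\bx_i \mid \bz_i) + \bS_e$, and a conditional covariance is positive semidefinite while $\bS_e \succ 0$ by assumption; hence $\bS_{y|z} \succeq \bS_e \succ 0$. Algebraically, the same fact follows by writing $\bS_x - \bS_x \bT^T (\bT \bS_x \bT^T)^{-1} \bT \bS_x = \bS_x^{1/2}(\bI_n - \bP)\bS_x^{1/2}$ with $\bP \eqdef \bS_x^{1/2}\bT^T (\bT \bS_x \bT^T)^{-1}\bT \bS_x^{1/2}$ a symmetric idempotent, hence an orthogonal projection, so $\bI_n - \bP \succeq 0$.

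With the two conditional densities in hand, deriving (\ref{eq:ML_detect_rule}) is a direct reduction of (\ref{eq:MLdetection-rule2}). Evaluating $p(\bx_i + \bbe \mid \bz_i)$ at $\bx_i + \bbe = \by$ gives a Gaussian density with mean $\bm_{y_i|z_i}$ and the common covariance $\bS_{y|z}$; the normalizing constants $(2\pi)^{-n/2}\det(\bS_{y|z})^{-1/2}$ are identical on both sides and cancel. Taking logarithms reduces the test to comparing the two Mahalanobis terms $(\by - \bm_{y_i|z_i})^T \bS_{y|z}^{-1}(\by - \bm_{y_i|z_i})$, and because $\bS_{y|z}^{-1}$ is shared, the pieces quadratic in $\by$ cancel as well; what remains is exactly $\|\bS_{y|z}^{-1/2}(\by - \bm_{y_0|z_0})\|^2$ versus $\|\bS_{y|z}^{-1/2}(\by - \bm_{y_1|z_1})\|^2$, and taking nonnegative square roots yields (\ref{eq:ML_detect_rule}).

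Finally, for the conditional error (\ref{eq:cond_prob_err}) I would whiten. Conditioned on $(\bz_0, \bz_1)$ and on $H_i$, the observation obeys $\by \sim \cN(\bm_{y_i|z_i}, \bS_{y|z})$ (the other reduced codeword being independent of $\by$ under $H_i$), so $\tilde{\by} \eqdef \bS_{y|z}^{-1/2}\by \sim \cN(\bS_{y|z}^{-1/2}\bm_{y_i|z_i}, \bI_n)$, reducing the test to a minimum-distance rule between two unit-covariance Gaussians with means $\tilde{\bm}_i \eqdef \bS_{y|z}^{-1/2}\bm_{y_i|z_i}$. For equal priors the optimal boundary is the perpendicular bisector of the segment joining $\tilde{\bm}_0$ and $\tilde{\bm}_1$; projecting $\tilde{\by} - \tilde{\bm}_0$ onto the unit vector along $\tilde{\bm}_1 - \tilde{\bm}_0$ gives a scalar $\cN(0,1)$ variable under $H_0$, and an error occurs precisely when it exceeds the half-separation $\|\tilde{\bm}_0 - \tilde{\bm}_1\|/2$. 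Hence $\Pr(\textrm{error} \mid H_0, \bz_0, \bz_1) = \trQ(\|\tilde{\bm}_0 - \tilde{\bm}_1\|/2)$, and by the symmetry of the configuration the same holds under $H_1$; averaging over the equal priors leaves (\ref{eq:cond_prob_err}). None of these steps is genuinely hard; the one place that warrants care is confirming positive definiteness of $\bS_{y|z}$ and, relatedly, that the conditional covariance is common to both hypotheses, since it is precisely this coincidence that collapses what would otherwise be a quadratic discriminant into the clean minimum-distance form.
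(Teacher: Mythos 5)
Your proposal is correct, and its overall skeleton is the same as the paper's: establish that $(\by_i,\bz_i)$ is jointly Gaussian, condition to obtain $\bm_{y_i|z_i}$ and the hypothesis-independent covariance $\bS_{y|z}$, cancel the common normalizing constants so that the MAP rule collapses to a minimum Mahalanobis-distance test, and finish with a one-dimensional Gaussian computation giving the $\trQ$-function. Where you genuinely depart from the paper is in the positive-definiteness of $\bS_{y|z}$, which is the main technical content of the paper's Lemma~\ref{lem:pos-def}: there, the authors stack $\left(\bx_i,\bbe\right)$, prove that the stacking map $\bF$ is full rank (this is where $r(\bT)=m$ enters), conclude that the joint covariance $\bF\bH\bF^T$ is positive definite, and then invoke the Horn--Johnson facts that a conditional covariance is the inverse of a principal submatrix of the inverse of the joint covariance. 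You instead use the independence of $\bbe$ from $\left(\bx_i,\bz_i\right)$ to write $\bS_{y|z} = \textrm{Cov}\left(\bx_i \,|\, \bz_i\right) + \bS_e$, which gives $\bS_{y|z} \succeq \bS_e \succ 0$ immediately, with the projection identity $\bS_x - \bS_x\bT^T\left(\bT\bS_x\bT^T\right)^{-1}\bT\bS_x = \bS_x^{1/2}\left(\bI_n - \bP\right)\bS_x^{1/2}$ as a purely algebraic backup. Your route is shorter, more elementary, avoids the matrix-analysis citations, and yields the quantitative lower bound $\bS_{y|z} \succeq \bS_e$ for free; the paper's heavier route has the side benefit that its full-rank/joint-positive-definiteness argument is reused in the proof of Proposition~\ref{prop:Exp_Chernoff_bound} to show $\bT\bS_x\bT^T$ is positive definite. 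Your error-probability step (whiten, project onto the difference direction, compare against half the separation) is the paper's scalar statistic $\theta = \left(\bm_{y_0|z_0}-\bm_{y_1|z_1}\right)^T\bS_{y|z}^{-1}\by$ in geometric guise, and you make explicit a point the paper leaves implicit, namely that under $H_0$ the observation $\by$ is independent of $\bz_1$, so conditioning on the pair $\left(\bz_0,\bz_1\right)$ changes nothing.
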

\begin{proof}
See Appendix~\ref{prf:cond_pro_err}.
\end{proof}

\begin{remark}
Using Theorem~\ref{thm:cond_pro_err}, we see that, if $\bz_0 = \bz_1$, conditional probability of error is $1/2$, 
which is meaningful. Then, 
there is nothing to discriminate from the detector's perspective thereby converting the detection to a fair coin toss. 
\end{remark}

\begin{remark}
The argument of the $Q$-function in (\ref{eq:cond_prob_err}) is always non-negative.  
This  allows us to set a tight bound on the expected probability of error, and analyze 
it in Sec.~\ref{sec:opt-linear-expectation}.
\end{remark}

\Section{Optimal Linear Operators In The Expectation Sense}
\label{sec:opt-linear-expectation}

In this section, our performance criterion is based on the {\em expected} (unconditional) probability of error of the MAP 
detector, denoted by $P_e$, given by 
\begin{eqnarray}
 \hspace{-0.6cm}P_e & = & 
%\textrm{E}_{\{\bz_0,\bz_1\}}\left(\Pr\left[\mbox{error} \, \big| \, \{\bz_0,\bz_1\}\right]\right) , 
\textrm{E}_{\{\bz_0,\bz_1\}} \left[ P_{e | \bz_0 , \bz_1 } \right] , 
\nonumber \\
 \hspace{-1cm} & \hspace{-0.1cm} = & \hspace{-0.1cm}
\textrm{E}_{\{\bz_0,\bz_1\}}\left(\trQ \left( \frac{ \|   \bS_{ y | z }^{-1/2} \left( \bm_{ y_0 | z_0 }  - \bm_{ y_1 | z_1 } \right) \| }{2} 
 \right)\right) 
\label{eq:expected_error}
\end{eqnarray}
%\end{equation}
where $\textrm{E}_{\{\bz_0,\bz_1\}}\left(.\right)$ denotes expectation with respect to the joint distribution of $\bz_0$ 
and $\bz_1$, and the right hand side follows from (\ref{eq:cond_prob_err}). 

\begin{remark}
It appears to be manageable to find a linear transform that minimizes the conditional probability of error, 
%$\Pr\left[\mbox{error} \, \big| \, \{\bz_0,\bz_1\}\right]$ 
$P_{e | \bz_0 , \bz_1}$
(see, for instance, \cite{Mih08}) as a function of the transmitted 
signals, $\bx_0$ and $\bx_1$, which would yield an ``input-adaptive optimal transform''. 
On the other hand, the expected probability of error given by (\ref{eq:expected_error}) is not tractable for an analogous analysis,
carried  out to characterize the optimal linear transform $\bT$ that minimizes it. 
This stems from the fact that, such an optimal $\bT$ would be a function of the overall statistics of the system
(corresponding to applying the operator of  $\textrm{E}_{\{\bz_0,\bz_1\}}\left(.\right)$ in (\ref{eq:expected_error}))
rather than individual realizations, which yields a ``complicated'' cost function to minimize; 
the result of the expectation operation, i.e., the $m\times m$-fold integration in~(\ref{eq:expected_error}) is 
not given in terms of standard analytical functions.
Therefore, we continue our analysis by characterizing linear operator(s) that minimize {\em a tight upper bound} on the expected 
probability of error defined by~(\ref{eq:expected_error})
\label{rem:expected-reasoning}
\end{remark}

Hence, we proceed with the following approach:  
We first bound 
%$\Pr\left[\mbox{error} \, \big| \, \{\bz_0,\bz_1\}\right]$ 
$P_{e | \bz_0 , \bz_1}$
for any given pair of $\left\{\bz_0  ,  \bz_1 \right\}$ 
from above and make use of the fact that expected value of this upper bound is an upper bound on $P_e$
(since, by definition, 
$P_{e | \bz_0 , \bz_1} \geq 0$). 
%$\Pr \left[ \mbox{error} \, \big| \, \left\{\bz_0  ,  \bz_1 \right\} \right] \geq 0$). 
Also, note that the use of an {\em upper} bound clearly makes sense since we aim
to {\em minimize}  $P_e$. The upper bound on 
$P_{e | \bz_0 , \bz_1}$
%$\Pr\left[\mbox{error} \, \big| \, \left\{\bz_0  ,  \bz_1 \right\} \right]$
that we use is the {\em Chernoff bound} on the $\trQ$-function (see {\em Basic Inequality} in~\cite{Loeve}), which is an 
exponentially decaying and a sufficiently tight bound. The expected Chernoff bound, which replaces the primary objective function $P_e$ in the design of optimal linear transfom $\bT$ due to its analytical tractability and sufficient tightness, is derived in the following proposition.

\begin{proposition}
The Chernoff bound on 
$P_{e | \bz_0 , \bz_1}$ is 
%$\Pr \left[\mbox{error} \, \big| \, \left\{\bz_0 ,  \bz_1 \right\} \right]$ is given by
\be
%\Pr \left[\mbox{error} \, \big| \, \left\{\bz_0  ,  \bz_1 \right\} \right] 
\hspace{-0.07cm}
P_{e | \bz_0 , \bz_1}
\leq \frac{1}{2} \exp 
\left(-\frac{ \| \bS_{ y | z }^{-1/2} \left( \bm_{ y_0 | z_0 }  - \bm_{ y_1 | z_1 } \right) \|^2 }{8}\right) , 
\label{eq:chernoff_bound}
\ee
yielding the following corresponding ``expected Chernoff bound'' on $P_e$ 
\be
P_e \leq \frac{1}{2}\left\{\det \left( \bI_m+ \frac{1}{2} \bW \right) \right\}^{-1/2}
%\frac{\bT \bS_x \bS_{y \, | \, z}^{-1} \bS_x \bT^T \left( \bT \bS_x \bT^T \right)^{-1}}{ 2 } \right) \right\}^{-\frac{1}{2}}
\label{eq:Exp_Chernoff_bound}
\ee
where $\bW \eqdef \bT \bS_x \bS_{y \, | \, z}^{-1} \bS_x \bT^T \left( \bT \bS_x \bT^T \right)^{-1}$ 
\label{prop:Exp_Chernoff_bound}
\end{proposition}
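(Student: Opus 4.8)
The plan is to establish the two displayed bounds in sequence. The first inequality, (\ref{eq:chernoff_bound}), is immediate: Theorem~\ref{thm:cond_pro_err} gives $P_{e | \bz_0 , \bz_1} = \trQ(\alpha)$ with $\alpha = \| \bS_{y|z}^{-1/2} ( \bm_{y_0|z_0} - \bm_{y_1|z_1} ) \| / 2$, and the standard Chernoff bound on the standard-normal tail, $\trQ(\alpha) \leq \tfrac{1}{2} e^{-\alpha^2/2}$ (the \emph{Basic Inequality} of~\cite{Loeve}), yields (\ref{eq:chernoff_bound}) after substituting $\alpha^2 = \| \bS_{y|z}^{-1/2}(\bm_{y_0|z_0}-\bm_{y_1|z_1}) \|^2 / 4$.

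The substantive part is evaluating the expectation of the right-hand side of (\ref{eq:chernoff_bound}) over $\{\bz_0,\bz_1\}$. First I would linearize the argument: writing $\bM \eqdef \bS_x \bT^T ( \bT \bS_x \bT^T )^{-1}$, Theorem~\ref{thm:cond_pro_err} gives $\bm_{y_i|z_i} = \bM \bz_i$, so the difference of conditional means equals $\bM \bv$ with $\bv \eqdef \bz_0 - \bz_1$. Since $\bz_i = \bT \bx_i$ and the $\bx_i$ are independent $\cN(\bzero,\bS_x)$, the vector $\bv$ is zero-mean Gaussian with covariance $2 \bT \bS_x \bT^T$, which is positive definite because $r(\bT)=m$ and $\bS_x$ is positive definite. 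The exponent then becomes the quadratic form $-\tfrac{1}{8} \bv^T \bK \bv$, where $\bK \eqdef \bM^T \bS_{y|z}^{-1} \bM$.

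Next I would invoke the standard Gaussian integral identity: for $\bv \sim \cN(\bzero,\bS)$ and symmetric $\bA$ one has $\textrm{E}[ \exp( -\tfrac{1}{2}\bv^T \bA \bv ) ] = \{ \det( \bI_m + \bS \bA ) \}^{-1/2}$, obtained by completing the square and renormalizing the Gaussian density. Applying this with $\bS = 2\bT\bS_x\bT^T$ and $\bA = \tfrac{1}{4}\bK$ (so that $\tfrac{1}{2}\bv^T\bA\bv = \tfrac{1}{8}\bv^T\bK\bv$) gives $\textrm{E}_{\{\bz_0,\bz_1\}}[\cdot] = \{ \det( \bI_m + \tfrac{1}{2}\bT\bS_x\bT^T\bK ) \}^{-1/2}$. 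Convergence of the integral is guaranteed because $\bS_{y|z}$ is positive definite (Theorem~\ref{thm:cond_pro_err}) makes $\bK$ positive semidefinite, so $\bS^{-1} + \bA$ is positive definite.

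The final step is an algebraic simplification of $\bT\bS_x\bT^T\bK$ into the claimed $\bW$. Using $\bM^T = (\bT\bS_x\bT^T)^{-1}\bT\bS_x$ (exploiting symmetry of $\bS_x$ and of $\bT\bS_x\bT^T$) together with the telescoping identity $\bT\bS_x\bT^T\bM^T = \bT\bS_x$, one obtains $\bT\bS_x\bT^T\bK = \bT\bS_x\bS_{y|z}^{-1}\bM = \bT\bS_x\bS_{y|z}^{-1}\bS_x\bT^T(\bT\bS_x\bT^T)^{-1} = \bW$, which delivers (\ref{eq:Exp_Chernoff_bound}). I expect the main obstacle to be purely bookkeeping in this last reduction — correctly tracking which factor of $(\bT\bS_x\bT^T)^{-1}$ cancels against $\bT\bS_x\bT^T$ — rather than any conceptual difficulty, since both the Chernoff bound and the Gaussian integral identity are off-the-shelf.
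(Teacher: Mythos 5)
Your proposal is correct and follows essentially the same route as the paper's proof: apply the Chernoff bound on $\trQ(\cdot)$ to the conditional error probability, reparametrize via $\bz_0-\bz_1 \sim \cN\left(\bzero, 2\bT\bS_x\bT^T\right)$, evaluate the Gaussian expectation of the exponentiated quadratic form as a determinant (the paper writes out the integral explicitly where you invoke the identity $\textrm{E}\left[\exp\left(-\tfrac{1}{2}\bv^T\bA\bv\right)\right]=\left\{\det\left(\bI_m+\bS\bA\right)\right\}^{-1/2}$, which amounts to the same computation), and cancel $\bT\bS_x\bT^T$ against $\left(\bT\bS_x\bT^T\right)^{-1}$ to obtain $\bW$. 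Your positive-definiteness argument for convergence (positive definite plus positive semidefinite) is a slightly leaner sufficient condition than the paper's, which establishes full positive definiteness of the quadratic-form matrix via full-rankness, but both are valid.
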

\begin{proof}
See Appendix~\ref{prf:Exp_Chernoff_bound}.
\end{proof}

\begin{remark}
The bound on expected (unconditional) probability of error of the MAP detector, given by~(\ref{eq:Exp_Chernoff_bound}) 
is the objective function we aim to minimize in this section. The minimization (over $\bT$) 
is carried out over a class of linear transformations that posses certain properties
imposed by the physical structure of the analyzed system. 
The obvious one of these properties is the dimension of the transformation (i.e., the fact that $\bT$ is a $m \times n$ matrix); 
the other one is the constraint on its rank (i.e., the fact that $r \left( \bT \right) = m$). 
The rank constraint is set to ensure that the dimensionality of the subspace (which is equal to $r \left( \bT \right)$), 
to which the partial information shared by the two sides of the communication belongs, is at a certain desired level;
this is because of the following fact: the performance of a system, which utilizes a rank-deficient transformation, 
is analogous to the performance of another system, the transformation of which is full-rank and has the same rank 
as the previous rank-deficient transformation.
\label{rem:optimalT-rank}
\end{remark}
 
\begin{definition}
\label{def:opt_trans}
The ``expected probability of error bound minimizing transform $\bT_{opt}$'' is given by 
\be
\bT_{opt} =  \argmax_{\substack{\bT\in\bbR^{m\times n}\\  r\left( \bT \right) = m}} \quad  
\det \left( \bI_m + \frac{1}{2} \bW \right)
%\frac{\bT \bS_x \bS_{y \, | \, z}^{-1} \bS_x \bT^T \left( \bT \bS_x \bT^T \right)^{-1}}{2}\right)
\label{eq:optimal-T-exp}
\ee
\end{definition}

\begin{proposition}
Let 
$\cS_\bT \eqdef \left\{ \bT \, \big| \, \bT \in \bbR^{m\times n}, r \left( \bT \right) = m \right\}$, 
$\cS_\bM \eqdef \left\{ \bM \, \big| \, \bM \in \bbR^{n\times m}, \bM^T \bM = \bI_m \right\}$, 
$\bP \eqdef \bL^{-1} \bF^T \left( \bS_x \right.$ $\left. + \bS_e \right) \bF \bL^{-1}$. 
Let the SVD of $\bS_x$ and $\bP$ be given by $\bS_x = \bF \bL^2 \bF^T$ and $\bP = \bU_p \bL_p \bU_p^T$, 
respectively, and $\hat{\bL}_p \eqdef \bI_n - \bL_p^{-1}$. Also define 
\[
G \left( \bM \right) \eqdef \left( \frac{1}{2} \right)^m \prod_{i=1}^m \left[ 1 + \frac{1}{\lambda_i \left( \bM^T \hat{\bL}_p \bM \right) }
\right]
\]
%\qquad 
\[
J \left( \bT \right) \eqdef \det \left[ \bI_m + \frac{1}{2} \bW \right]
%J \left( \bT \right) \eqdef \det \left[ \bI_m + \frac{1}{2} \bT \bS_x \bS_{y|z}^{-1} \bS_x \bT^T \left( \bT \bS_x \bT^T \right)^{-1} \right]
. 
\]
Suppose there exists 
\begin{equation}
\bM^\ast = \argmax_{\bM \in \cS_\bM} G \left( \bM \right). 
\label{eq:optimal-reduct-M}
\end{equation}
Then, letting
$\bT^\ast \eqdef \bE \bD \left( \bM^\ast \right)^T \bU_p^T \bL^{-1} \bF^T$, where $\bE \in \bbR^{m\times m}$ is an arbitrary 
unitary matrix and $\bD \in \bbR^{m \times m}$ is an arbitrary diagonal positive-definite matrix, we have 
$\bT^\ast = \argmax_{\bT \in \cS_\bT} J \left( \bT \right)$. 
\label{prop:optimal-T-prb-reduction}
\end{proposition}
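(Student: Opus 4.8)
The plan is to reduce the matrix optimization (\ref{eq:optimal-T-exp}) over $\cS_\bT$ to the Stiefel-manifold optimization (\ref{eq:optimal-reduct-M}) over $\cS_\bM$ through a sequence of invertible changes of variable that successively whiten $\bS_x$, strip away the degrees of freedom on which $J$ does not depend, and diagonalize $\bP$. First I would introduce the substitution $\tbT \eqdef \bT \bF \bL$, which is a bijection of $\cS_\bT$ onto the full-row-rank $m\times n$ matrices because $\bF\bL$ is invertible. Writing $\mathbf{\Pi} \eqdef \tbT^T (\tbT \tbT^T)^{-1} \tbT$ for the orthogonal projector onto the row space of $\tbT$, and using $\bS_x = \bF\bL^2\bF^T$ together with the expression for $\bS_{y|z}$ from Theorem~\ref{thm:cond_pro_err}, a direct computation gives $\bL^{-1}\bF^T \bS_{y|z}\bF\bL^{-1} = \bP - \mathbf{\Pi}$, i.e. $\bS_{y|z} = \bF\bL(\bP-\mathbf{\Pi})\bL\bF^T$; here the definition $\bP = \bL^{-1}\bF^T(\bS_x+\bS_e)\bF\bL^{-1}$ is exactly what is needed. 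Substituting this into $\bW$ from (\ref{eq:Exp_Chernoff_bound}) and cancelling the factors $\bL\bF^T$ and $\bF\bL$ collapses it to $\bW = \tbT (\bP-\mathbf{\Pi})^{-1}\tbT^T(\tbT\tbT^T)^{-1}$.

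Next I would exploit the invariance of the objective. Replacing $\tbT$ by $\bA\tbT$ for any invertible $\bA\in\bbR^{m\times m}$ leaves $\mathbf{\Pi}$ unchanged and conjugates $\bW \mapsto \bA\bW\bA^{-1}$, so $J(\bT) = \det(\bI_m + \frac12\bW)$ depends only on the row space of $\tbT$. Hence I may take $\tbT = \bM^T$ with $\bM\in\cS_\bM$ an orthonormal basis of that row space, giving $\tbT\tbT^T = \bI_m$, $\mathbf{\Pi} = \bM\bM^T$, and $J = \det(\bI_m + \frac12 \bM^T(\bP - \bM\bM^T)^{-1}\bM)$. Diagonalizing $\bP = \bU_p\bL_p\bU_p^T$ and setting $\bN \eqdef \bU_p^T\bM$ (again a bijection of $\cS_\bM$, under which $G$ is unchanged) reduces this to $J = \det(\bI_m + \frac12\bN^T(\bL_p - \bN\bN^T)^{-1}\bN)$.

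The remaining step is the algebraic identity $J = G$. Since $\bS_e$ is positive definite and $\bL^{-1}\bF^T\bS_x\bF\bL^{-1} = \bI_n$, one has $\bP = \bI_n + \bL^{-1}\bF^T\bS_e\bF\bL^{-1} \succ \bI_n$, so $\bL_p \succ \bI_n$ and $\hat{\bL}_p = \bI_n - \bL_p^{-1} \succ \bzero$; this guarantees that all the inverses above exist and that $\bN^T\hat{\bL}_p\bN \succ \bzero$. Applying the Woodbury identity to $(\bL_p - \bN\bN^T)^{-1}$ and simplifying with $\bN^T\bN = \bI_m$ yields $\bN^T(\bL_p - \bN\bN^T)^{-1}\bN = \bK(\bI_m - \bK)^{-1}$ with $\bK \eqdef \bN^T\bL_p^{-1}\bN$, whence $J = \det(\bI_m - \frac12\bK)/\det(\bI_m - \bK)$. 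Because $\bN^T\hat{\bL}_p\bN = \bI_m - \bK$, writing this ratio of determinants over the eigenvalues of $\bK$ reproduces exactly $G(\bN)$, so $J(\bT) = G(\bN)$.

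Finally I would assemble the conclusion: as $\bT$ ranges over $\cS_\bT$, $\bN$ ranges over all of $\cS_\bM$, so $\max_{\cS_\bT} J = \max_{\cS_\bM} G$, attained (by compactness of $\cS_\bM$ and continuity of $G$) at the assumed maximizer $\bM^\ast$. Unwinding the substitutions, the optimal row space of $\tbT^\ast$ is spanned by the columns of $\bU_p\bM^\ast$, so any $\tbT^\ast = \bE\bD(\bM^\ast)^T\bU_p^T$ with $\bE\bD$ invertible is optimal; translating back through $\bT = \tbT\bL^{-1}\bF^T$ gives precisely $\bT^\ast = \bE\bD(\bM^\ast)^T\bU_p^T\bL^{-1}\bF^T$, and the freedom in the unitary $\bE$ and the diagonal positive-definite $\bD$ is exactly the $\bA$-invariance noted above. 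The main obstacle I expect is the bookkeeping in the first and third steps: correctly tracking the whitening and diagonalizing conjugations so that $\bS_{y|z}$, $\bW$ and $J$ all collapse onto the projector $\mathbf{\Pi}$, and carrying out the Woodbury reduction cleanly while verifying the positivity $\bP\succ\bI_n$ that makes every inverse and every factor $1/\lambda_i(\bN^T\hat{\bL}_p\bN)$ well defined.
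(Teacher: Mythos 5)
Your proposal is correct -- I checked the projector identity $\bL^{-1}\bF^T\bS_{y|z}\bF\bL^{-1} = \bP - \mathbf{\Pi}$, the resulting form $\bW = \tbT(\bP-\mathbf{\Pi})^{-1}\tbT^T(\tbT\tbT^T)^{-1}$, the invariance $J(\bA\tbT)=J(\tbT)$, the Woodbury reduction to $\det(\bI_m - \frac{1}{2}\bK)/\det(\bI_m - \bK)$ with $\bK = \bN^T\bL_p^{-1}\bN$, and the eigenvalue matching with $G$ -- and it takes a genuinely different route from the paper, even though both share the same skeleton of showing that $J$ on $\cS_\bT$ and $G$ on $\cS_\bM$ take exactly the same values. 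The paper proves two lemmas: first, for $\bT$ already in the parametrized form $\bE\bD\bM^T\bU_p^T\bL^{-1}\bF^T$ it verifies $G(\bM)=J(\bT)$ by direct expansion plus the matrix inversion lemma; second, for arbitrary $\bT\in\cS_\bT$ it constructs an orthonormal $\bM \eqdef \bU_p^T\bL\bF^T\bT^T\tilde{\bE}\tilde{\bD}^{-1}$ from the eigendecomposition $\bT\bS_x\bT^T = \tilde{\bE}\tilde{\bD}^2\tilde{\bE}^T$, checks $\bM^T\bM=\bI_m$, and observes that $\bT$ then has the parametrized form; optimality of $\bT^\ast$ follows by contradiction. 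You instead whiten via $\tbT = \bT\bF\bL$, expose the row-space projector structure, prove that $J$ depends only on the row space of $\tbT$, and evaluate $J$ on an orthonormal representative; this is cleaner in two concrete ways. First, the arbitrariness of $\bE$ and $\bD$ in the statement is explained rather than merely verified (your argument shows any invertible left factor works, a mild generalization of unitary-times-positive-diagonal). Second, you actually prove $\bP \succ \bI_n$ and hence $\hat{\bL}_p \succ \bzero$, a fact the paper uses in the proof of Proposition~\ref{prop:opt-reduced-soln} without justification. What the paper's version buys is mainly brevity, since it compresses the same core algebra into ``algebraic manipulations'' and the matrix inversion lemma. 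One phrasing slip you should fix: the parenthetical claim that $\bM \mapsto \bN = \bU_p^T\bM$ leaves ``$G$ unchanged'' is false as literally stated, since $G(\bU_p^T\bM) \neq G(\bM)$ in general; what your argument needs, and what your final assembly correctly uses, is only that this map is a bijection of $\cS_\bM$ onto itself, so that maximizing over $\bN$ is the same as maximizing $G$ over all of $\cS_\bM$.
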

\begin{proof}
See Appendix~\ref{prf:optimal-T-prb-reduction}.
\end{proof}

Proposition~\ref{prop:optimal-T-prb-reduction} allows us to deduce the existence of $\bT_{opt}$ with the sufficiency of the existence of $\bM^\ast$. Then, in order to find an optimal linear transformation, which is the main goal of this section, we first need to show the existence of $\bM^\ast$, and then construct $\bT_{opt}$ using $\bM^*$ that is the solution for the reduced problem~(\ref{eq:optimal-reduct-M}).

\begin{proposition}
A set of solutions for~(\ref{eq:optimal-reduct-M}) is given by
%\be
%\label{eq:opt-soln-reduced-prb}
\[
\cM = \left\{\bM \in \cS_{\bM} \; \Bigg| \; \bM = \bQ^T 
\left[\begin{array}{c}\bGam_m \\\b0_{\left(n-m\right)\times m}\end{array}\right] 
\right\}, 
%\bGam_m^T \bGam_m = \bI_m \right\} , 
\]
%\mbox{ $\bGam_m \in \bbR^{m\times m}$ unitary} \right\} , 
%is a unitary matrix}\right\} ,
%\ee
where $\bGam_m \in \bbR^{m\times m}$ is a unitary matrix,
$\bQ \in \left\{ 0 , 1 \right\}^{n \times n}$ denotes a permutation matrix s.t. the eigenvalues of 
$\bQ \hat{\bL}_p \bQ^T$ are in non-decreasing order. Moreover, 
\begin{equation}
\max_{\bM \in \cS_\bM} \prod_{i=1}^m \left[ 1 + \frac{1}{\lambda_i \left( \bM^T \hat{\bL}_p \bM \right)} \right]
= \prod_{i \in \cI} \left[ 1 + \frac{1}{\lambda_i \left( \hat{\bL}_p \right) } \right] , 
\label{eq:optimal_det_val}
\end{equation}
where $\cI \subseteq \left\{ 1 , 2 , \ldots , n \right\}$ denotes the cardinality-$m$ index set corresponding to the $m$-smallest 
eigenvalues of $\hat{\bL}_p$. 
%Moreover,
%\be
%\max_{\bM \in \cS_{\bM}} \quad \prod_{i=1}^m \quad 1+\frac{1}{\lambda_i\left( \bM^T \hat{\bL}_{\bP} \bM \right)}  \quad =  \quad
%\prod_{i=1}^m  \quad 1+\frac{1}{\lambda_i\left(\hat{\bL}_{\bP}\right)}.
%\label{eq:optimal_det_val}
%\ee
\label{prop:opt-reduced-soln}
\end{proposition}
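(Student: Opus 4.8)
The plan is to first pin down the structure of $\hat{\bL}_p$ and reduce the problem to a clean eigenvalue optimization. Since $\bS_x = \bF \bL^2 \bF^T$ with $\bF$ orthogonal, we have $\bL^{-1} \bF^T \bS_x \bF \bL^{-1} = \bI_n$, so $\bP = \bI_n + \bL^{-1} \bF^T \bS_e \bF \bL^{-1}$. Because $\bS_e$ is positive definite, the added term is positive definite, whence every eigenvalue of $\bP$ exceeds $1$; therefore the diagonal matrix $\hat{\bL}_p = \bI_n - \bL_p^{-1}$ is positive definite with diagonal entries in $(0,1)$. Consequently, for any $\bM \in \cS_\bM$ the compression $\bM^T \hat{\bL}_p \bM$ is symmetric positive definite, all $\lambda_i\left(\bM^T \hat{\bL}_p \bM\right)$ are strictly positive, and the objective is well-defined. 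Since the constant $\left(1/2\right)^m$ in $G$ does not affect the maximizer, it suffices to maximize $\prod_{i=1}^m \left[ 1 + 1/\lambda_i\left(\bM^T \hat{\bL}_p \bM\right)\right]$, which is exactly the quantity in~(\ref{eq:optimal_det_val}).

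For the upper bound I would invoke the Poincar\'{e} separation theorem (a direct consequence of the Courant--Fischer min--max characterization): writing the eigenvalues of $\hat{\bL}_p$ as $\lambda_1 \leq \cdots \leq \lambda_n$ and those of $\bM^T \hat{\bL}_p \bM$ as $\mu_1 \leq \cdots \leq \mu_m$, any $\bM$ with $\bM^T \bM = \bI_m$ satisfies $\mu_k \geq \lambda_k$ for each $k = 1, \ldots, m$. Because $x \mapsto 1 + 1/x$ is strictly decreasing on $\left(0,\infty\right)$, this gives the term-by-term inequality $1 + 1/\mu_k \leq 1 + 1/\lambda_k$, and taking the product over $k$ yields $\prod_{i=1}^m \left[ 1 + 1/\lambda_i\left(\bM^T \hat{\bL}_p \bM\right)\right] \leq \prod_{i \in \cI} \left[ 1 + 1/\lambda_i\left(\hat{\bL}_p\right)\right]$, where $\cI$ indexes the $m$ smallest eigenvalues. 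This certifies the right-hand side of~(\ref{eq:optimal_det_val}) as an upper bound over all of $\cS_\bM$.

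To establish attainment on $\cM$, I would substitute a generic element directly. Since $\bQ$ is the permutation that sorts the eigenvalues of $\hat{\bL}_p$ in non-decreasing order, $\bQ \hat{\bL}_p \bQ^T = \textrm{diag}\left(\lambda_1, \ldots, \lambda_n\right)$, and a short block computation using the zero block of $\bM = \bQ^T \left[\bGam_m^T \; \b0^T\right]^T$ gives $\bM^T \hat{\bL}_p \bM = \bGam_m^T \, \textrm{diag}\left(\lambda_1, \ldots, \lambda_m\right) \bGam_m$. As $\bGam_m$ is unitary, this is a similarity transform, so the eigenvalues of $\bM^T \hat{\bL}_p \bM$ are precisely the $m$ smallest eigenvalues of $\hat{\bL}_p$; the analogous congruence together with $\bQ \bQ^T = \bI_n$ and $\bGam_m^T \bGam_m = \bI_m$ confirms $\bM^T \bM = \bI_m$, so indeed $\bM \in \cS_\bM$. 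Plugging these eigenvalues into the product reproduces the right-hand side of~(\ref{eq:optimal_det_val}), proving that every member of $\cM$ is a maximizer.

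The only genuinely nontrivial ingredient is the interlacing bound $\mu_k \geq \lambda_k$; the rest is bookkeeping. I expect the main obstacle to be stating the Poincar\'{e} separation inequality cleanly under the paper's non-decreasing ordering convention and aligning the term-by-term monotonicity argument with that convention, since an inadvertent reversal of the ordering would flip the inequality and misidentify which eigenvalues are selected. The achievability step, by contrast, is routine block-matrix algebra once the role of $\bQ$ as the sorting permutation is made explicit.
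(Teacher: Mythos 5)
Your proof is correct and follows essentially the same route as the paper's: the Poincar\'{e} separation theorem (with the non-decreasing ordering convention, giving $\lambda_i\left(\bM^T \hat{\bL}_p \bM\right) \geq \lambda_i\left(\hat{\bL}_p\right)$) combined with monotonicity of $x \mapsto 1 + 1/x$ yields the upper bound, and the block substitution $\bM = \bQ^T \left[\bGam_m^T \; \b0^T\right]^T$ shows attainment by unitary similarity. Your explicit verification that $\hat{\bL}_p$ is positive definite with entries in $(0,1)$ (via $\bP = \bI_n + \bL^{-1}\bF^T\bS_e\bF\bL^{-1}$) is a welcome detail that the paper only asserts in passing, but it does not change the substance of the argument.
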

\begin{proof}
See Appendix~\ref{prf:opt-reduced-soln}.
\end{proof}

\begin{theorem} 
\label{thm:optimal_lin_trans}
A set of optimal linear transforms, in the sense of expected Chernoff bound on the probability of error $P_e$, for communication with partial information in the Gaussian setup is given by
\be
\label{eq:opt_lin_trans_set}
\cT =  \left\{\bT \in \cS_{\bT} \; | \; \bT = \bE \bD \bM^T \bU_{p}^T \bL^{-1} \bF^T\right\}
\ee
%; 
%\mbox{ $\bE, \bD \in \bbR^{m\times m}$ are unitary and diagonal respectively and 
%$\bM \in \cM$}\right\},
%\ee
where 
$\bE \in \bbR^{m\times m}$ is unitary, 
 $\bD \in \bbR^{m\times m}$ is diagonal, 
$\bM \in \cM$, 
$\cS_{\bT} = \left\{ \bT \in \bbR^{m\times n} \, | \, r \left( \bT \right) = m \right\}$, $\cM$ is given by Proposition~\ref{prop:opt-reduced-soln} and $\bF$, $\bL$ and $\bU_{p}$ denote matrix of eigenvectors and diagonal matrix of eigenvalues of $\bS_x$ and the matrix of eigenvectors of $\bP = \bL^{-1}\bF^T\left( \bS_x + \bS_e \right) \bF \bL^{-1}$, respectively. 
%Moreover, the cardinality of $\cT$ is equal to that of the continuum.
\end{theorem}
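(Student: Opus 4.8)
The plan is to assemble Theorem~\ref{thm:optimal_lin_trans} directly from Propositions~\ref{prop:optimal-T-prb-reduction} and~\ref{prop:opt-reduced-soln}, which together have already done the analytical heavy lifting. The starting observation is that the objective $J(\bT) = \det[\bI_m + \frac{1}{2}\bW]$ is, by the definition of $\bW$ in Definition~\ref{def:opt_trans}, invariant under the left action $\bT \mapsto \bA\bT$ for any nonsingular $\bA \in \bbR^{m\times m}$: a short computation shows that $\bS_{y|z}$ is unchanged and $\bW$ is replaced by $\bA\bW\bA^{-1}$, so $\det[\bI_m + \frac{1}{2}\bW]$ is untouched. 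Hence $J$ depends on $\bT$ only through the structure encoded by $\bM$, and the unitary factor $\bE$ and the diagonal factor $\bD$ are genuinely free. This is precisely the structural fact Proposition~\ref{prop:optimal-T-prb-reduction} exploits: it collapses the maximization of $J$ over the full-rank class $\cS_\bT$ to the maximization of $G(\bM)$ over the compact manifold $\cS_\bM = \{\bM : \bM^T\bM = \bI_m\}$, and produces a maximizer of $J$ from any maximizer $\bM^\ast$ of $G$ via $\bT^\ast = \bE\bD(\bM^\ast)^T \bU_p^T \bL^{-1}\bF^T$.

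First I would discharge the existence hypothesis of Proposition~\ref{prop:optimal-T-prb-reduction}. Since $\cS_\bM$ is compact and $G$ is continuous on it (note that $\hat{\bL}_p \succ 0$, because $\bP \succ \bI_n$, so $\bM^T \hat{\bL}_p \bM$ has strictly positive eigenvalues and no denominator in $G$ vanishes), the maximum in~(\ref{eq:optimal-reduct-M}) is attained and $\bM^\ast$ exists. Proposition~\ref{prop:opt-reduced-soln} then supplies the entire solution set $\cM$ explicitly, together with the optimal value~(\ref{eq:optimal_det_val}) attained by selecting, through the permutation $\bQ$, the $m$ smallest eigenvalues of $\hat{\bL}_p$, modulo an arbitrary internal rotation $\bGam_m$.

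Next I would reassemble. Substituting each $\bM \in \cM$ into the reconstruction formula of Proposition~\ref{prop:optimal-T-prb-reduction}, and letting $\bE$ range over all $m\times m$ unitary matrices and $\bD$ over all nonsingular diagonal matrices (nonsingularity being needed to keep $r(\bT)=m$), produces exactly the set $\cT$ of~(\ref{eq:opt_lin_trans_set}). Membership $\bT \in \cS_\bT$ follows by tracking ranks through the product: $\bF$ and $\bL$ are nonsingular since $\bS_x \succ 0$, $\bU_p$ is orthogonal, $\bM$ has orthonormal columns (rank $m$), and $\bE\bD$ is nonsingular, so the composite has rank $m$. Conversely, each such $\bT$ is a maximizer of $J$ by the invariance argument combined with Proposition~\ref{prop:opt-reduced-soln}.

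The main obstacle I anticipate is establishing completeness rather than mere sufficiency, i.e., arguing that no maximizer of $J$ lies outside $\cT$. This amounts to verifying that the correspondence in Proposition~\ref{prop:optimal-T-prb-reduction} maps \emph{onto} the maximizers up to the free $\bE\bD$ factor, and that $\cM$ in Proposition~\ref{prop:opt-reduced-soln} truly exhausts the maximizers of $G$ — in particular, that the only freedom beyond the optimal eigenvalue selection is the block rotation $\bGam_m$ acting on the selected $m$-dimensional eigenspace of $\hat{\bL}_p$. Granting the two propositions, the residual work is the bookkeeping of composing their conclusions and the rank verification above, both of which are routine.
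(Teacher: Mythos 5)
Your proposal is correct and takes essentially the same route as the paper, whose proof likewise just combines Proposition~\ref{prop:optimal-T-prb-reduction} (reconstruction of an optimal $\bT$ from any maximizer $\bM^\ast$) with Proposition~\ref{prop:opt-reduced-soln} (the explicit solution set $\cM$, which already establishes existence constructively, making your compactness argument a valid but unnecessary extra). Your anticipated obstacle about completeness is moot: the theorem asserts only that $\cT$ is \emph{a} set of optimal transforms --- every element of $\cT$ maximizes $J\left( \bT \right)$ --- not that $\cT$ exhausts all maximizers, so no surjectivity of the correspondence is required.
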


\begin{proof}
By Proposition~\ref{prop:optimal-T-prb-reduction} we know that $\cT \neq \emptyset$.
We also know for a given $\bM^*$, i.e. $\bM$ satisfying~(\ref{eq:optimal-reduct-M}), $\bT = \bE \bD \bM^{*^T}\bU_{p}^T\bL^{-1}\bF^T$ satisfies~(\ref{eq:optimal-T-exp}), i.e., $\bT = \bT_{opt}$ (cf. Appendix~\ref{prf:optimal-T-prb-reduction}). 
Moreover, a set of $\bM$ satisfying~(\ref{eq:optimal-reduct-M}), namely $\cM$, is given by Proposition~\ref{prop:opt-reduced-soln}. This clearly implies that $\cT$, induced by $\cM$, is a set of optimal linear transforms, in the sense of expected Chernoff bound on the probability of error $P_e$.
%, for communication with partial information in the Gaussian setup. 
%Finally, since by Proposition~\ref{prop:optimal-T-prb-reduction} we know that $\cM \neq \emptyset$ implies that cardinality of $\cT$ is that of the continuum, we are done by Proposition~\ref{prop:opt-reduced-soln}.
\end{proof}

\Section{Numerical Results}
\label{sec:numerical-results}

\noindent
\underline{\em Optimality of $\bT^\ast$}:
Theorem~\ref{thm:optimal_lin_trans} gives a set of optimal linear transforms, however does not address the ``denseness'' of $\cT$ in $\cS_{\bT}$: ``is it easy to find an optimal transform in $\cS_{\bT}$ randomly, and how much is the performance of transforms in $\cS_{\bT} \backslash \cT$ separated from that of optimal transforms?''. The computational provided in Fig.~\ref{fig:optTvsrandomT} provide an experimental basis.
In  Fig.~\ref{fig:optTvsrandomT}, the simulations are performed with $\bS_x$ and $\bS_e$ having uniformly distributed eigenvalues, and the result is given using the reciprocal of the Chernoff bound on $P_e$ to improve visibility. 
%This figure shows two simple results. One result 
The first observation is that it is not ``easy'' to guess an element of $\cT$ randomly (we actually simulated over much larger number of trials, however give here the result for a set of $1000$ trials for illustrative purposes). This is clear by observing that none of the transforms chosen randomly from $\cS_{\bT}$ achieves the optimal value calculated from~(\ref{eq:optimal_det_val}) in Proposition~\ref{prop:opt-reduced-soln}, except $\bT_{opt}$ constructed by~(\ref{eq:opt_lin_trans_set}) and indicated as the transform in the middle of set of transforms, i.e. $\bT_{500}$. Also, the minimum value of the bound on $P_e$ achieved by arbitrary choices is not even close to that achieved by $\bT_{opt}$, it is around $4$ times larger than the minimum bound on $P_e$. Thus, we experimentally conjecture that $\cT$ is not ``dense'' in $\cS_{\bT}$. \\
% (moreover, the difference in performances is much larger for other choices of $n, m$ and $\textrm{E}\left( \|\bx\|^2 \right)/ \textrm{E}\left(\|\bbe\|^2 \right)$, but again we hold this difference in a visible range). By these two results, it is possible to claim that $\cT$ is not ``dense'' in $\cS_{\bT}$, i.e. construction for optimality gives a distinguished performance in the sense of Chernoff bound on $P_e$.
\begin{figure}[!htb]
\centering \centerline{\epsfxsize = 3in
\epsfbox{./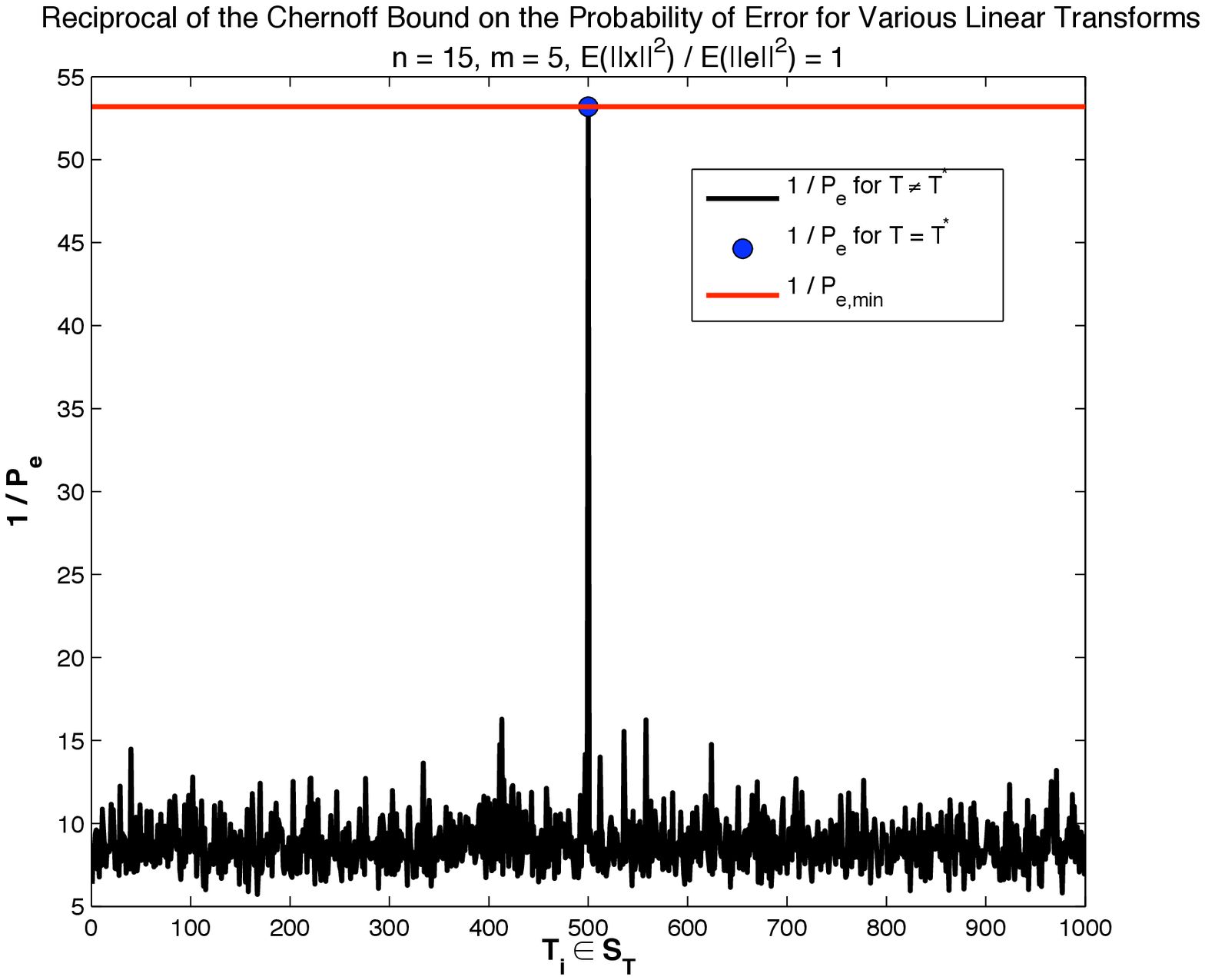}} \caption{Performance of $\bT_{opt}$ compared to arbitrary $\bT \in \cS_{\bT}$} 
\label{fig:optTvsrandomT}
\end{figure}

\noindent
\underline{\em $P_e$ vs. $\textrm{E}\left( \|\bx\|^2 \right)/ \textrm{E}\left(\|\bbe\|^2 \right)$}: In this part we observe the effect of  
$SNR = \textrm{E}\left( \|\bx\|^2 \right)/ \textrm{E}\left(\|\bbe\|^2 \right)$ on the optimality of $\bT_{opt}$. 
Fig.~\ref{fig:PevsSNR} is given to discuss this effect.
Similar to the setup of top-left panel,  the simulations are performed with $\bS_x$ and $\bS_e$ having uniformly distributed eigenvalues. As expected, the performance at optimality improves with increasing SNR since 
it gets easier to differentiate $\bz_0$ from $\bz_1$ in that case. \\
%The result is clear and expected; as $SNR$ increases the performance at optimality is improved, since for higher $SNR$ it is possible to discriminate $\bz_0$ from $\bz_1$ better, which then improves detection if we consider ML detection rule given by~(\ref{eq:ML_detect_rule}) and $\bm_{ y_i | z_i }  = \left. \textrm{E} \left( \by_i \, | \, \bz_i \right) \right|_{ \by_i = \bx_i +  \bbe} = \bS_x \bT^T \left( \bT \bS_x \bT^T \right)^{-1} \bz_i $.
\begin{figure}[!htb]
\centering \centerline{\epsfxsize = 3in
\epsfbox{./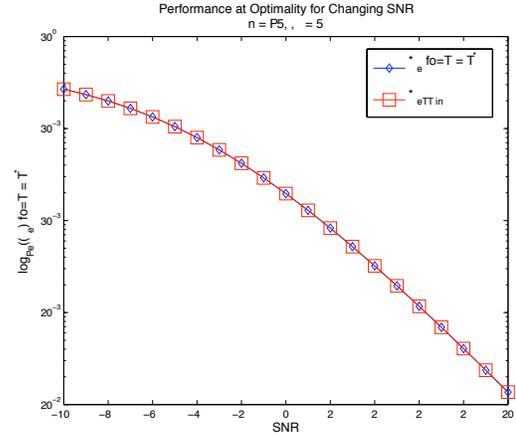}} \caption{Performance of $\bT_{opt}$ vs. SNR (dB), $P_e$ indicates Chernoff bound on expected probability of error here} 
\label{fig:PevsSNR}
\end{figure}

\noindent
\underline{\em $P_e$ vs. $m$}:
In this case, we study the effects of the amount of partial information shared by the detector side on the bound on the expected performance of the detector. 
%In other words, we observe the effect of the information available to the detector to construct its ``codebook'' for the binary data. 
This case is studied for $\bS_x$ and $\bS_e$ having uniformly distributed eigenvalues and $SNR = 1$. For $n = 50$, we construct $\bT_{opt}$ for particular values of $m$ and evaluate its performance in the sense of expected Chernoff bound on $P_e$. Results are shown  in Fig.~\ref{fig:PevsM}.
As expected, the capability of the detector improves as the amount of partial information increases. 
%Again the result is as expected; for increasing amount of partial information the capability of the detector to make a decision between $\bz_0$ and $\bz_1$ increases, so we observe a decreasing optimal bound, which is achieved by $\bT_{opt,e}$. 
Also, as $m$ tends to $n$, the performance at optimality converges to that for $m = n$, which is the Gaussian bound (the case 
when  $\bT$ is invertible).\\
% given by $\frac{1}{2}\left\{\det \left( \bI_n+\frac{\bS_x \bS_e^{-1}} { 2 } \right) \right\}^{-\frac{1}{2}}$, derived from~(\ref{eq:Exp_Chernoff_bound}). \\
\begin{figure}[!htb]
\centering \centerline{\epsfxsize = 3in
\epsfbox{./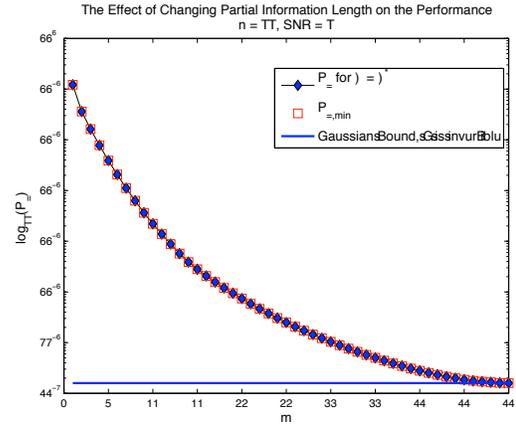}} \caption{Performance of $\bT_{opt}$ vs. $m$ (length of partial information)} 
\label{fig:PevsM}
\end{figure}

\noindent
\underline{\em $P_e$ vs. $n$}:
In this part we study the effect of changes in signal length on the performance of $\bT_{opt}$. The simulation
results, for various $\bS_x$ and $\bS_e$ all having uniformly distributed eigenvalues, are shown in Fig.~\ref{fig:PevsN}. 
At first glance, the results might seem counter-intuitive. 
The crucial point is that since $m$ (the dimension of the partial information space) is constant, 
as $n$ increases we get more degrees of freedom to construct  $\bT_{opt}$ (i.e. 
the number of eigenvalues of $\bP$ increases and so does (\ref{eq:optimal_det_val}), improving the detector performance).  
%the cardinality of the set of eigenvalues of $\bP$ increases and hence we get a possibly higher optimality given by~(\ref{eq:optimal_det_val}) that $\bM^*$ achieves, hence optimality of corresponding $\bT_{opt,e}$ is improved). One can conclude that, since we increase the redundancy in the signal representing binary information, the performance of decoding is improved, which is similar to an argument for error-correction coding.
\begin{figure}[!htb]
\centering \centerline{\epsfxsize = 3in
\epsfbox{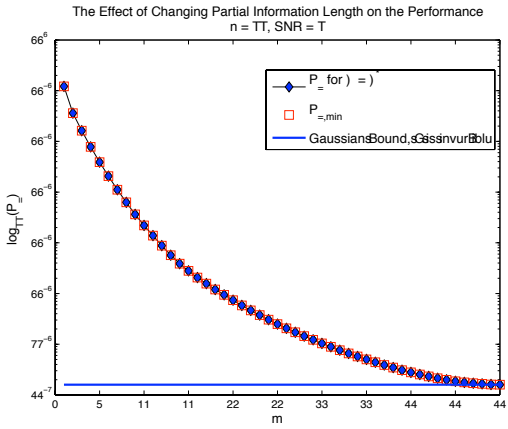}} \caption{Performance of $\bT_{opt}$ vs. $n$ (signal length)} 
\label{fig:PevsN}
\end{figure}

\Section{Conclusions} 
\label{sec:conclusions} 

We introduce the concept of communication with partial information. 
The main idea is that the codebooks used by the transmitter and the receiver are different. 
This concept  is different from that of communication with side information, where 
the utilized codebooks are the same but there is extra information available to one of the 
communicating parties. 

Within the context of communication with partial information, we particularly concentrate
on a binary detection theoretic scenario. The transmitter sends one of the two codewords
(which are independent realizations of a colored multivariate Gaussian distribution) 
to the additive colored Gaussian noise channel. 
The receiver acts as a detector, using 
{\em dimensionality reduced versions} of the encoder codewords, where the dimensionality 
reduction is achieved via a linear transform. 
We first find the optimal (in the sense of probability of error) 
detection rule. Then we derive the optimal class of linear transforms in the sense of 
the expected value of the Chernoff bound on the conditional probability of error of the detector. 

Although the  focus here is on binary detection, we believe that the proposed 
``communication with partial information'' covers several setups of interest, especially the cases
where there is an inherent asymmetry between the transmitter and the receiver due to the unbalanced 
limitations on the physical resources, such as memory and computational power. In our future research, 
we plan to explore various communication theoretic setups where asymmetry is a crucial feature.

\appendices
\renewcommand{\theequation}{\Roman{section}.\arabic{equation}}

%%%%%%%%%%%%%%%%%%%%%%%%%%%%%%%%%%%%%%%%%%%%%%%%%%%%%%%%

\section{Proof of Theorem~\ref{thm:cond_pro_err}}
\label{prf:cond_pro_err}

Throughout the proof, we use the definitions of $\by_i \eqdef \bx_i + \bbe$ for $i \in \left\{ 0 , 1 \right\}$. 
Accordingly, we use $\bm_{ y_i | z_i } = \textrm{E} \left( y_i | z_i \right)$ and $\bS_{y_i | z_i} = \textrm{Cov} \left( y_i | z_i \right)$. 
We start with the following lemma. 

\begin{lemma}
For $i \in \left\{ 0 , 1 \right\}$, conditioned on $\bz_i$, $\by_i$ is a normal random vector. Furthermore 
\be
%\bS_{y | z} = 
\bS_{y_i| z_i}  = \bS_x+\bS_e-\bS_x\bT^T(\bT\bS_x\bT^T)^{-1}\bT\bS_x, 
\label{eq:covariance-matrix}
\ee
is independent of $i$ and positive definite.
\label{lem:pos-def}
\end{lemma}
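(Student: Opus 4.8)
The plan is to exploit the joint Gaussianity of the pair $\left( \by_i , \bz_i \right)$ and then invoke the standard Gaussian conditioning formula. First I would observe that both $\by_i = \bx_i + \bbe$ and $\bz_i = \bT \bx_i$ are linear images of the single Gaussian vector $\left( \bx_i^T , \bbe^T \right)^T$, via
\[
\begin{pmatrix} \by_i \\ \bz_i \end{pmatrix} = \begin{pmatrix} \bI_n & \bI_n \\ \bT & \b0 \end{pmatrix} \begin{pmatrix} \bx_i \\ \bbe \end{pmatrix} .
\]
Since $\bx_i$ and $\bbe$ are independent zero-mean Gaussians, their concatenation is Gaussian, and a linear map of a Gaussian vector is again Gaussian; hence $\left( \by_i , \bz_i \right)$ is jointly Gaussian, and therefore the conditional law of $\by_i$ given $\bz_i$ is normal. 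This settles the first assertion.

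Next I would read off the second-order statistics using the independence of $\bx_i$ and $\bbe$: $\textrm{Cov} \left( \by_i \right) = \bS_x + \bS_e$, $\textrm{Cov} \left( \bz_i \right) = \bT \bS_x \bT^T$, and $\textrm{Cov} \left( \by_i , \bz_i \right) = \bS_x \bT^T$ (the noise $\bbe$ drops out of the cross-covariance). Because $r \left( \bT \right) = m$ and $\bS_x$ is positive definite, $\bT \bS_x \bT^T$ is an $m \times m$ positive definite matrix, hence invertible; this justifies the forthcoming inverse. Substituting these blocks into the conditional covariance identity for jointly Gaussian vectors,
\[
\bS_{y_i | z_i} = \textrm{Cov} \left( \by_i \right) - \textrm{Cov} \left( \by_i , \bz_i \right) \, \textrm{Cov} \left( \bz_i \right)^{-1} \, \textrm{Cov} \left( \bz_i , \by_i \right) ,
\]
yields exactly~(\ref{eq:covariance-matrix}). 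Since none of $\bS_x$, $\bS_e$, $\bT$ carries the index $i$ (the codewords $\bx_0, \bx_1$ are identically distributed), the resulting matrix is manifestly independent of $i$.

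The remaining, and most delicate, point is positive definiteness, since the expression is a \emph{difference} and could a priori fail to be positive definite. The key observation I would use is that the combination $\bS_x - \bS_x \bT^T \left( \bT \bS_x \bT^T \right)^{-1} \bT \bS_x$ is itself the conditional covariance $\textrm{Cov} \left( \bx_i \, | \, \bz_i \right)$ of $\bx_i$ given $\bz_i = \bT \bx_i$, obtained from the same conditioning formula; being a covariance matrix, it is positive semidefinite. Consequently, for any nonzero $\bv \in \bbR^n$,
\[
\bv^T \bS_{y_i | z_i} \bv = \bv^T \bS_e \bv + \bv^T \left( \bS_x - \bS_x \bT^T \left( \bT \bS_x \bT^T \right)^{-1} \bT \bS_x \right) \bv \geq \bv^T \bS_e \bv > 0 ,
\]
where the strict inequality uses the positive definiteness of $\bS_e$. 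This establishes positive definiteness and completes the argument. I expect this last step to be the only part requiring a genuine idea: recognizing that the difference equals $\bS_e$ plus a bona fide (hence positive semidefinite) conditional covariance is what makes positive definiteness immediate, sidestepping any direct eigenvalue estimate of the difference.
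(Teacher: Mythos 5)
Your proof is correct, and its first half (joint Gaussianity of $\left( \by_i , \bz_i \right)$ via the linear map $\bF = \left[ \begin{array}{cc} \bI_n & \bI_n \\ \bT & \bzero_{m \times n} \end{array} \right]$, followed by the standard Gaussian conditioning formula) coincides with the paper's. Where you genuinely diverge is the positive-definiteness step. The paper shows that $\bF^T$ has trivial null-space, concludes that the joint covariance $\bF \bH \bF^T$ (with $\bH = \left[ \begin{array}{cc} \bS_x & \bzero \\ \bzero & \bS_e \end{array} \right]$) is positive definite, and then invokes matrix-analysis facts (Horn--Johnson (7.1.2) and (7.7.5)): the conditional covariance is the inverse of a principal submatrix of $\left( \bF \bH \bF^T \right)^{-1}$, hence positive definite. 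You instead split $\bS_{y_i | z_i} = \bS_e + \left[ \bS_x - \bS_x \bT^T \left( \bT \bS_x \bT^T \right)^{-1} \bT \bS_x \right]$ and recognize the bracket as $\textrm{Cov} \left( \bx_i \, | \, \bz_i \right)$, a bona fide conditional covariance and therefore positive semidefinite, so strict positivity comes from $\bS_e$ alone; equivalently, since $\bbe$ is independent of $\left( \bx_i , \bz_i \right)$, one has $\textrm{Cov} \left( \by_i \, | \, \bz_i \right) = \textrm{Cov} \left( \bx_i \, | \, \bz_i \right) + \bS_e$. Your route is more elementary and self-contained: it needs only the invertibility of $\bT \bS_x \bT^T$ (which you correctly justify from $r \left( \bT \right) = m$ and $\bS_x$ positive definite), avoids the full-rank argument for $\bF$ and the Schur-complement citations, and in fact shows the conclusion would persist with $\bS_x$ merely positive semidefinite as long as $\bT \bS_x \bT^T$ remains invertible. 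What the paper's heavier argument buys is the positive definiteness of the entire joint covariance $\bF \bH \bF^T$, an intermediate fact the authors explicitly reuse later (in Appendix~II, the positive definiteness of $\bS_z = \bT \bS_x \bT^T$ is obtained ``using similar steps to the ones that are used in the proof of positive definiteness of $\bF \bH \bF^T$''), so their detour is not wasted within the paper's overall economy.
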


\begin{proof}
The crucial point is to show that, for $i \in \left\{ 0 , 1 \right\}$, $\by_i$ and $\bz_i$ are jointly normal with a positive definite covariance matrix. 
First, consider $\left[ \begin{array}{c} \bx_i \\ \bbe \end{array} \right] \in\bbR^{2n}$. Since $\bx_i$ and $\bbe$ are both normal and are independent, 
they are also jointly normal  with zero mean and the covariance matrix of 
$\bH \eqdef \left[\begin{array}{cc}\bS_x & \bzero \\ \bzero & \bS_e\end{array}\right] \in \bbR^{2n\times 2n}$. 
Note that, $\bH$ is clearly positive definite, since for any 
$\bv = \left[ \begin{array}{c} \bv_1 \\ \bv_2 \end{array} \right] \in \bbR^{2n}$
where $\bv_1 , \bv_2 \in \bbR^n$, 
$\bv^T\bH \bv = \bv_1^T\bS_x\bv_1+\bv_2^T \bS_e\bv_2\geq0$ by the positive definiteness of $\bS_x$ and $\bS_e$ (that we assumed). 
By the same token, $\left[ \bv_1^T \bS_x \bv_1 + \bv_2^T \bS_e \bv_2 = 0 \right] \iff  \left[ \bv_1 = \bv_2 = \bzero \right] \iff \left[ \bv = \bzero \right]$, yielding the positive definiteness of $\bH$.\\
Now, consider the linear transformation from the normal random vector $\left[ \begin{array}{c} \bx_i  \\ \bbe \end{array} \right] \in \bbR^{2n}$ 
to the vector $\left[ \begin{array}{c} \by_i \\ \bz_i \end{array} \right]\in \bbR^{n+m}$ represented by $\bF = \left[\begin{array}{cc}\bI_n & \bI_n \\\bT & \bzero_{m\times n}\end{array}\right] \in \bbR^{\left(n+m\right)\times 2n}$, where $\bzero_{m\times n}$ denotes the $m\times n$ zero matrix. 
This linear transform establishes the normality of $\left[ \begin{array}{c} \by_i \\ \bz_i \end{array} \right] \in \bbR^{n+m}$ (by the properties 
of jointly normal random vectors) with zero mean and the covariance matrix of $\bF\bH\bF^T = \left[ \begin{array} {cc} 
\bS_x + \bS_e & \bS_xT^T \\ \bT \bS_x & \bT \bS_x \bT^T \end{array} \right]$. To deduce the positive definiteness of this covariance matrix, i.e., 
$\bF\bH\bF^T$, it is sufficient to show that $\bF$ is full rank. 
This stems from the fact that if $\bF$ is full rank (i.e., if $r\left(\bF\right)=m+n$ since $m<n$), for any nonzero vector $\bs \in \bbR^ {m+n}$ we have $\bF^T\bs =  \bw \neq \bzero \in \bbR^{2n}$ since $\bF^T$ has a trivial {\em null-space}, so we end-up with $\bs^T\bF\bH\bF^T\bs = \bw^T\bH\bw > 0$ by the positive definiteness of $\bH$. 

To establish the full-rank property of $\bF$ (equivalent to having ``$\bF^T$ has a trivial null-space''), consider 
$\ba = \left[ \begin{array}{c} \ba_1 \\ \ba_2 \end{array} \right] \in\bbR^{m+n}$ where $\ba_1\in\bbR^n$ and $\ba_2\in\bbR^m$. 
In this case, $\bF^T\ba = \left[ \begin{array}{c}  \ba_1+\bT^T\ba_2 \\ \ba_1 \end{array} \right]$. 
Suppose there exists some $\ba \neq \bzero$ such that $\bF^T \ba = \bzero$. This implies, $\ba_1 = \bzero$ and $\bT^T \ba_2 = \bzero$. 
However, since $r(\bT)=m$, $\left[ \bT^T\ba_2=\bzero \right] \iff \left[ \ba_2=\bzero \right]$. 
Therefore, $\left[ \bF^T \ba = \bzero \right] \iff \left[ \ba = \bzero \right]$ and hence contradiction. Thus, $\bF$ is necessarily full-rank
implying positive-definiteness of the covariance matrix of $\left[ \begin{array}{c} \by_i \\ \bz_i \end{array} \right]$, i.e., $\bF \bH \bF^T$. 

Finally, the normality of $\left[\by_i\hspace{0.05in}|\hspace{0.05in}\bz_i\right]$ follows from the properties of 
normal distributed random variables. The positive definiteness of the corresponding covariance 
matrix $\bS_{y_i|z_i} = \bS_x+\bS_e-\bS_x\bT^T(\bT\bS_x\bT^T)^{-1}\bT\bS_x$ follows from the fact that it is the inverse of 
a principal submatrix of the inverse of $\bF\bH\bF^T$, which is positive definite (see (7.1.2) and (7.7.5) in~\cite{HornJohn99}).
Also, $\bS_{y_i|z_i}$ is clearly independent of $i \in \left\{ 0 , 1 \right\}$.  
\end{proof}

Per Lemma~\ref{lem:pos-def}, since $\bS_{y|z} = \bS_{y_i|z_i}$ is positive definite, it is invertible and it has an invertible square root. 

\begin{remark}
From properties of normal random vectors, we have  
\be
\bm_{y_i|z_i}=\textrm{E} \left( \by_i \, | \, \bz_i \right) =  \bS_x\bT^T(\bT\bS_x\bT^T)^{-1}\bz_i.
\label{eq:conditional-mean}
\ee
\label{rem:cond-mean}
\end{remark}

Now let $\beta \eqdef [(2
\pi)^{n/2}\det(\bS_{y|z})^{1/2}]^{-1}$, 
$\theta \eqdef \left( \bm_{y_0|z_0} - \right.$ $\left. \bm_{y_1|z_1} \right)^T \bS_{y|z}^{-1} \by$, and $\alpha_i \eqdef (\by-\bm_{y_i|z_i})^T \bS_{y|z}^{-1}
(\by-\bm_{y_i|z_i})$, $\kappa_i \eqdef \bm_{y_i|z_i}^T \bS_{y|z}^{-1} \bm_{y_i|z_i}$ for $i = 0, 1$; where $\bS_{y|z}$ and $\bm_{y_i|z_i}$ are given in~(\ref{eq:covariance-matrix}) and~(\ref{eq:conditional-mean}), respectively. Then, using Lemma~\ref{lem:pos-def} and Remark~\ref{rem:cond-mean}, given $\by$ is observed we have 
 \begin{equation}
p\left(\by_i | \bz_i \right)\Big|_{y_i=y} = 
%\frac{1}{(2 \pi)^{n/2}\det(\bS_{y|z})^{1/2}} 
\beta 
\exp \left[- \frac{\alpha_i}{2}\right].
\label{eq:cond-dist-y}
\end{equation} 
Then, using the above distribution of $\left[\by_i \, | \, \bz_i \right]$ the maximum likelihood detection rule~(\ref{eq:MLdetection-rule2}) can be written as
\be
\beta \exp \left[-\frac{\alpha_0}{2}\right]
\mathop{\gtrless}_{H_1}^{H_0}
\beta \exp \left[-\frac{\alpha_1}{2}\right] \nonumber
\ee
which is equivalent to (\ref{eq:ML_detect_rule}) since $\det\left(\bS_{y|z}\right) \neq 0$ and $\exp(.)$ is a 
strictly increasing function in its argument. Moreover, 
%Next,  we proceed with the calculation of the error probability conditioned on $H_0$: 
\begin{eqnarray}
%\Pr \left[ \mbox{error} \, \big| \, H_0 \right]
P_{e | H_0 }
& = & 
\Pr  \left[ \alpha_0
> \alpha_1 \; 
\Big| \; \by \sim \cN \left( \bm_{y_0|z_0} , \bS_{y|z} \right) \right]
%\label{eq:ML-rule_reduced} \\
\nonumber \\
& = & 
\Pr \left[ \theta < \frac{\kappa_0 - \kappa_1}{2} \; 
\Big| \; \by \sim \cN(\bm_{y_0|z_0}, \bS_{y|z}) \right] , 
\nonumber
\end{eqnarray}
where $P_{e|H_0}$ denotes the probability of error conditioned on $H_0$. 
Here, conditioned on $H_0$, 
the random variable 
%$\theta \eqdef \left( \bm_{y_0|z_0} - \bm_{y_1|z_1} \right)^T \bS_{y|z}^{-1} \by \in \bbR$ 
$\theta$
is normal 
since  $\left( \bm_{y_0|z_0} - \bm_{y_1|z_1} \right)^T \bS_{y|z}^{-1}$ is a linear transformation from $\bbR^n$ to $\bbR$
and $\by | H_0$ is normal. 
Conditioned on $H_0$, the mean and variance of $\theta$ are given by 
\begin{eqnarray}
\mu_{\theta | H_0} & = &  
%\textrm{E} \left( \left( \bm_{y_0|z_0} - \bm_{y_1|z_1} \right)^T \bS_{y|z}^{-1} \by \;  \Big| \; \by \sim \cN(\bm_{y_0|z_0}, \bS_{y|z}) \right) = 
\left( \bm_{y_0|z_0} - \bm_{y_1|z_1} \right)^T \bS_{y|z}^{-1} \bm_{y_0|z_0} ,
\nonumber \\
\sigma_{\theta | H_0}^2  &  = & 
%\left( \bm_{y_0|z_0} - \bm_{y_1|z_1} \right)^T \bS_{y|z}^{-1} \bS_{y|z} \bS_{y|z}^{-1} \left( \bm_{y_0|z_0} - \bm_{y_1|z_1} \right) = 
\left( \bm_{y_0|z_0} - \bm_{y_1|z_1} \right)^T \bS_{y|z}^{-1} \left( \bm_{y_0|z_0} - \bm_{y_1|z_1} \right) . 
\nonumber
\end{eqnarray}
%\begin{eqnarray}
%\mu_{\theta | H_0} & = &  
%\textrm{E} \left( \left( \bm_{y_0|z_0} - \bm_{y_1|z_1} \right)^T \bS_{y|z}^{-1} \by \;  \Big| \; \by \sim \cN(\bm_{y_0|z_0}, \bS_{y|z}) \right) = 
%\left( \bm_{y_0|z_0} - \bm_{y_1|z_1}  \right)^T \bS_{y|z}^{-1} \textrm{E} \left( \by \;  \Big| \; \by \sim \cN(\bm_{y_0|z_0}, \bS_{y|z}) \right) 
%\nonumber \\
%& = & \left( \bm_{y_0|z_0} - \bm_{y_1|z_1} \right)^T \bS_{y|z}^{-1} \bm_{y_0|z_0} , 
%\nonumber
%\end{eqnarray}
%\[
%\sigma_{\theta | H_0}^2 = \left( \bm_{y_0|z_0} - \bm_{y_1|z_1} \right)^T \bS_{y|z}^{-1} \bS_{y|z} \bS_{y|z}^{-1} \left( \bm_{y_0|z_0} - \bm_{y_1|z_1} \right) 
%= \left( \bm_{y_0|z_0} - \bm_{y_1|z_1} \right)^T \bS_{y|z}^{-1} \left( \bm_{y_0|z_0} - \bm_{y_1|z_1} \right), 
%\]
%respectively. 
Then, $\Pr \left[ \mbox{error} \, | \, H_0 \right]$ is given in terms of the standard $Q$-function. 
As a result, after some algebraic manipulations we get 
\[
P_{e | H_0} =  \trQ \left( \frac{ \| \bS_{y|z}^{-1/2} \left( \bm_{y_0|z_0} -\bm_{y_1|z_1} \right) \| }{2} \right) 
= \trQ \left( \frac{ \sigma_{\theta | H_0} }{2} \right)
. 
\]
%\bean
%\Pr \left[ \mbox{error} \, | \, H_0 \right] 
%& = & \trQ \left( \frac{ \left( \bm_{y_0|z_0} - \bm_{y_1|z_1} \right)^T \bS_{y|z}^{-1} \bm_{y_0|z_0} - \left( \bm_{y_0|z_0}^T \bS_{y|z}^{-1} 
%\bm_{y_0|z_0} - \bm_{y_1|z_1}^T \bS_{y|z}^{-1} \bm_{y_1|z_1} \right)/2}{ \left[ \left( \bm_{y_0|z_0} - \bm_{y_1|z_1} \right)^T \bS_{y|z}^{-1} 
%\left( \bm_{y_0|z_0} - \bm_{y_1|z_1} \right) \right]^{\frac{1}{2}}} \right) , \\
%& = & \trQ \left( \frac{ \left[ \left( \bm_{y_0|z_0} - \bm_{y_1|z_1} \right)^T \bS_{y|z}^{-1} \left( \bm_{y_0|z_0} -\bm_{y_1|z_1} \right) \right]^{\frac{1}{2}}}{2} \right) , \\
%& = & \trQ \left( \frac{ \| \bS_{y|z}^{-1/2} \left( \bm_{y_0|z_0} -\bm_{y_1|z_1} \right) \| }{2} \right)  \; .
%\eean
Furthermore, from symmetry, we have 
$P_{e | \bz_0 , \bz_1 } = P_{e | H_0}$.
%$\Pr \left[ \mbox{error}  \, | \, \left\{ \bz_0 , \bz_1 \right\} \right] 
%= \Pr \left[ \mbox{error} \, | \, H_0 \right]$. Hence the proof.  
\qed

%%%%%%%%%%%%%%%%%%%%%%%%%%%%%%%%%%%%%%%%%%%%%%%%%%%%%%%%

\section{Proof of Proposition~\ref{prop:Exp_Chernoff_bound}}
\label{prf:Exp_Chernoff_bound}

First, we recall the standard Chernoff bound on $\trQ \left( \cdot \right)$  function: 
$\trQ \left( x \right) \leq \frac{1}{2} \exp \left(-\frac{x^2}{2}\right)$ for $x \geq 0$ \cite{Loeve}. 
Then, (\ref{eq:chernoff_bound}) is obvious via using it in (\ref{eq:cond_prob_err}). 
Next, we have 
\begin{eqnarray}
\hspace{-0.5cm} P_e & \leq & \textrm{E}_{\{\bz_0,\bz_1\}} 
\left[ \frac{1}{2} \exp  \left(-\frac{ 
%\| \bS_{ y | z }^{-1/2} \left( \bm_{ y_0 | z_0 }  - \bm_{ y_1 | z_1 } \right) \|^2 
\sigma^2_{\theta | H_0}
}{8}\right) \right], 
\label{eq:Exp_Chernoff_bound_temp1} \\
& = & 
\textrm{E}_{\{\bz_0,\bz_1\}}  \left[ 
\frac{1}{2} \exp \left( -\frac{\left[ \bA \bgam\right]^T\bB^{-1}\left[ \bA \bgam\right]}{8} \right)
\right]  , 
\label{eq:Exp_Chernoff_bound_temp2} \\
& = & 
\textrm{E}_{\bgam}  \left[ 
\frac{1}{2} \exp \left( -\frac{\left[ \bA \bgam\right]^T\bB^{-1}\left[ \bA \bgam\right]}{8} \right)
\right] ,
\label{eq:Exp_Chernoff_bound_temp3} \\
& = & 
\int_{\bbR^{m}} \frac {1/2}{\left( 2\pi \right)^{\frac{m}{2}} \det \left(2\bS_z \right)^{\frac{1}{2}}}
\nonumber \\
& & 
\hspace{-0.55cm}
\exp \left(-\frac{1}{2}  \bgam^T \left[ 
%\frac{ (\bS_z)^{-1} }{2} 
\left( 2\bS_z \right)^{-1} 
+ \frac{ \bA^T \bB^{-1} \bA}{4} \right]\bgam  \right)
\mbox{d$\bgam$} ,
\label{eq:Exp_Chernoff_bound_temp4} 
\end{eqnarray}
where 
(\ref{eq:Exp_Chernoff_bound_temp1}) follows from using (\ref{eq:chernoff_bound}) in (\ref{eq:expected_error}), 
(\ref{eq:Exp_Chernoff_bound_temp2}) follows from using the definitions of $\bS_{ y | z }$, $\bm_{ y_0 | z_0 }$, 
$\bm_{ y_0 | z_0 }$ (cf. Theorem~\ref{thm:cond_pro_err}) and defining 
$\bA \eqdef  \bS_x\bT^T(\bT\bS_x\bT^T)^{-1}$,  $\bB \eqdef \bS_{y \, | \, z}$, $\bgam \eqdef \bz_0 - \bz_1$, 
(\ref{eq:Exp_Chernoff_bound_temp3}) follows since the only source of randomness is due to $\bgam$ per 
our reparametrization, 
(\ref{eq:Exp_Chernoff_bound_temp4}) follows since $\bgam \sim \cN \left(   \bzero, 2\bS_{z} \right)$
where  $\bS_z = \textrm{Cov} \left( \bz_0 \right) = \textrm{Cov} \left( \bz_1 \right) =
\bT \bS_x \bT^T$. 

Next, we proceed by showing the positive definiteness of the matrix 
$\left[ \left( 2\bS_z \right)^{-1} + \frac{ \bA^T \bB^{-1} \bA}{4} \right]^{-1}$, which would ensure that it is a valid covariance matrix. 
First, by  assumption, $\bS_x$ is positive definite and $\bT$ is full-rank. 
Hence, using similar steps to the ones that are used in the proof of positive definiteness of $\bF \bH \bF^T$ within the proof of 
Lemma~\ref{lem:pos-def}, we conclude that $\bS_z = \bT \bS_x \bT^T$ is positive definite. 
Furthermore, 
%$\left[ \mbox{positive definiteness of $\bS_z = \bT \bS_x \bT^T$} \right] \iff \left[ 
%\mbox{positive} \right.$ $\left. \mbox{ definiteness of $2\bS_z$}\right]
%\iff \left[ \mbox{the positive definiteness of $\left( 2\bS_z \right)^{-1}$} \right]$ 
$\left[ \bS_z = \bT \bS_x \bT^T > 0  \right] \iff 
\left[  2\bS_z > 0 \right] \iff
 \left[ \left( 2\bS_z \right)^{-1} > 0  \right]$.
Next, note that $\bA$ is full-rank using straightforward linear algebra. 
Using this result and the positive definiteness of $\bB$, and applying similar arguments to those above, we
conclude that $\bA^T\bB^{-1}\bA$ is positive definite as well. 
Thus, $\left[ \left( 2\bS_z \right)^{-1} + \frac{ \bA^T \bB^{-1} \bA}{4} \right]^{-1}$ is positive definite since it is the inverse of the sum of two positive definite matrices, which is itself positive definite. 
As a result, the quantity $\left[ \left( 2\bS_z \right)^{-1} + \frac{ \bA^T \bB^{-1} \bA}{4} \right]^{-1}$ is a valid covariance matrix
and the integral (\ref{eq:Exp_Chernoff_bound_temp4}) converges, yielding
\[
P_e  \leq
%\frac{1}{2}\left(\frac{\det \left( \left[ \left( 2\bS_z \right)^{-1} + \frac{ \bA^T \bB^{-1} \bA}{4} \right]^{-1} \right)}
%{\det \left( 2\bS_z \right)} \right) ^{\frac{1}{2}}
%= 
\frac{1}{2} \left\{ \det \left( \bI_m + \frac{\bS_z \bA^T \bB^{-1} \bA}{2}\right) \right\} ^{-\frac{1}{2}},
\]
by properties of determinants; hence the proof. 
\qed

%%%%%%%%%%%%%%%%%%%%%%%%%%%%%%%%%%%%%%%%%%%%%%%%%%%%%%%
%%%%%%%%%%%%%%%%%%%%%%%%%%%%%%%%%%%%%%%%%%%%%%%%%%%%%%%%

\section{Proof of Proposition~\ref{prop:optimal-T-prb-reduction}}
\label{prf:optimal-T-prb-reduction}

Our first goal is to show that $\bT^\ast \in \cS_\bT$. First, note that, $\bT^\ast$ is a $m \times n$ matrix by construction. 
Next, observe that, by definition $\bE \bD$ is a $m \times m$, non-singular matrix and $\bU_p^T \bL^{-1} \bF^T$ is a 
$n \times n$, non-singular matrix. Furthermore, $\bM^\ast$ is of size $n \times m$ and $r \left( \bM^\ast \right) = m$, i.e., 
it is full-rank by definition. Hence, $\bT^\ast$ is also of rank-$m$, implying that $\bT^\ast \in \cS_\bT$. 
Next, using  $\bS_x = \bF \bL^2 \bF^T$ and the definition of $\bT^\ast$, 
after some algebraic manipulations we get 
\begin{eqnarray}
\bF \bL \bU_p \bM^\ast 
&  = &  
\bS_x \left( \bT^\ast \right)^T \bE \bD^{-1} , 
\label{eq:optimal-T-prb-reduction-temp1}  \\
\bE \bD^{-2} \bE^T 
& = &   
\left( \bT^\ast \bS_x \left( \bT^\ast \right)^T \right)^{-1} . 
\label{eq:optimal-T-prb-reduction-temp2}
\end{eqnarray}
%\begin{equation}
%\bF \bL \bU_p \bM^\ast 
% = 
%\bS_x \left( \bT^\ast \right)^T \bE \bD^{-1} , 
%\label{eq:optimal-T-prb-reduction-temp1} 
%\end{equation}
%we have 
%\begin{equation}
%\bF \bL \bU_p \bM^\ast 
% = 
%\bF \bL \cdot \bL \bF^T \cdot \bF \bL^{-1} \bU_[ \bM^\ast 
%= 
%\bS_x \bF \bL^{-1} \bU_p \bM^\ast 
%= 
%\bS_x \bF \bL^{-1} \bU_p \bM^\ast \cdot \bD \bE^T \cdot \bE \bD^{-1} 
%= 
%\bS_x \left( \bT^\ast \right)^T \bE \bD^{-1} 
%\label{eq:optimal-T-prb-reduction-temp1} 
%\end{equation}
%\begin{equation}
%\bE \bD^{-2} \bE^T 
%=   
%\left( \bE \bD^2 \bE^T \right)^{-1} 
%= 
%\left( \bE \bD \cdot \left( \bM^\ast \right)^T \bU_p^T \cdot \bL^{-1} \bF^T \cdot \bF \bL^2 \bF^T \cdot \bF \bL^{-1} \cdot 
% \bU_p  \bM^\ast \bD \bE^T \right)^{-1} 
%= 
%\left( \bT^\ast \bS_x \left( \bT^\ast \right)^T \right)^{-1}
%\label{eq:optimal-T-prb-reduction-temp2}
%\end{equation}
\begin{lemma} 
For any  $\bM \in \cS_\bM$, letting $\bT = \bE \bD \bM^T \bU_p^T \bL^{-1} \bF^T$, where 
$\bE \in \bbR^{m\times m}$ is an arbitrary  unitary matrix and 
$\bD \in \bbR^{m \times m}$ is an arbitrary diagonal positive-definite matrix, 
we have $G \left( \bM \right) = J \left( \bT \right)$. 
\label{lem:optimal-T-prb-reduction-templemma}
\end{lemma}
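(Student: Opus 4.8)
The plan is to move to the basis that simultaneously whitens $\bS_x$ and diagonalizes $\bS_x+\bS_e$, evaluate $\bW$ there, and then peel the determinant apart into scalar factors governed by the eigenvalues of $\bM^T\hat{\bL}_p\bM$. Introduce the whitening map $\bG\eqdef\bL^{-1}\bF^T$; from $\bS_x=\bF\bL^2\bF^T$ one has $\bG\bS_x\bG^T=\bI_n$ and $\bG(\bS_x+\bS_e)\bG^T=\bP=\bU_p\bL_p\bU_p^T$. Writing $\bN\eqdef\bT\bG^{-1}=\bE\bD\bM^T\bU_p^T$, the orthonormality relations $\bM^T\bM=\bI_m$ and $\bU_p^T\bU_p=\bI_n$ immediately give $\bT\bS_x\bT^T=\bN\bN^T=\bE\bD^2\bE^T$, which is the analogue of~(\ref{eq:optimal-T-prb-reduction-temp2}) for an arbitrary $\bM\in\cS_\bM$.

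First I would rewrite $\bS_{y|z}$ (from Lemma~\ref{lem:pos-def}) in the whitened coordinates. Factoring $\bG^{-1}$ and $\bG^{-T}$ out of $\bS_x+\bS_e-\bS_x\bT^T(\bT\bS_x\bT^T)^{-1}\bT\bS_x$ yields $\bS_{y|z}=\bG^{-1}[\bP-\bN^T(\bN\bN^T)^{-1}\bN]\bG^{-T}$, and a short computation using $\bN^T(\bN\bN^T)^{-1}\bN=\bU_p\bM\bM^T\bU_p^T$ collapses the bracket to $\bU_p(\bL_p-\bM\bM^T)\bU_p^T$, which is positive definite by Lemma~\ref{lem:pos-def}. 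Substituting everything into $\bW=\bT\bS_x\bS_{y|z}^{-1}\bS_x\bT^T(\bT\bS_x\bT^T)^{-1}$, the whitening factors telescope and leave $\bW=\bE\bD\,\bM^T(\bL_p-\bM\bM^T)^{-1}\bM\,\bD^{-1}\bE^T$. Because $\det$ is invariant under conjugation by the unitary $\bE$ and by the diagonal $\bD$, the factors $\bE,\bD$ drop out entirely (consistent with their claimed arbitrariness), giving $J(\bT)=\det[\bI_m+\tfrac12\,\bM^T(\bL_p-\bM\bM^T)^{-1}\bM]$.

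The last and, I expect, the only genuinely delicate step is to express this inner matrix through $\hat{\bL}_p=\bI_n-\bL_p^{-1}$. I would apply the Woodbury identity to $(\bL_p-\bM\bM^T)^{-1}$ and, using $\bM^T\bM=\bI_m$ and a push-through rearrangement, establish $\bM^T(\bL_p-\bM\bM^T)^{-1}\bM=\bY(\bI_m-\bY)^{-1}$ with $\bY\eqdef\bM^T\bL_p^{-1}\bM$; the key observation is $\bM^T\hat{\bL}_p\bM=\bI_m-\bY$. Then $\bI_m+\tfrac12\bW$ is similar to $(\bI_m-\bY)^{-1}(\bI_m-\tfrac12\bY)$, so $J(\bT)=\det(\bI_m-\tfrac12\bY)/\det(\bI_m-\bY)$.

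Finally I would diagonalize $\bY$, with eigenvalues $\mu_i$, so that $\lambda_i(\bM^T\hat{\bL}_p\bM)=1-\mu_i$; these lie in $(0,1)$ because $\bP=\bI_n+\bL^{-1}\bF^T\bS_e\bF\bL^{-1}$ forces every eigenvalue of $\bL_p$ above $1$, hence $\hat{\bL}_p$ is positive definite and $\bM^T\hat{\bL}_p\bM$ is invertible. Each scalar factor then becomes $(1-\mu_i/2)/(1-\mu_i)=\tfrac12[1+1/(1-\mu_i)]$, and the product over $i$ is precisely $G(\bM)$, completing the identification $J(\bT)=G(\bM)$. The main obstacle is the Woodbury/push-through bookkeeping together with verifying invertibility of $\bI_m-\bY$; the positive definiteness already supplied by Lemma~\ref{lem:pos-def} makes the latter routine.
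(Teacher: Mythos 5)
Your proposal is correct and follows essentially the same route as the paper's proof: both hinge on the matrix inversion lemma (Woodbury) together with the eigendecompositions $\bS_x = \bF\bL^2\bF^T$ and $\bP = \bU_p\bL_p\bU_p^T$ to identify $\det\left[\bI_m + \frac{1}{2}\bW\right]$ with $\left(\frac{1}{2}\right)^m\det\left[\bI_m + \left(\bM^T\hat{\bL}_p\bM\right)^{-1}\right]$, which is $G\left(\bM\right)$. The only difference is presentational: you run the computation from $J\left(\bT\right)$ to $G\left(\bM\right)$ via an explicit whitening map and a final scalar factorization over eigenvalues, whereas the paper runs the same algebra in the opposite direction and leaves the Woodbury bookkeeping implicit; your version in fact supplies the details the paper omits.
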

\begin{proof}
We have 
\begin{eqnarray}
G \left( \bM \right) 
& = & 
%\left( \frac{1}{2} \right)^m \prod_{i=1}^m \left[ 1 + \frac{1}{\lambda_i \left(  \bM^T \hat{\bL}_p \bM \right) } \right]
%= 
%\left( \frac{1}{2} \right)^m \prod_{i=1}^m \lambda_i \left( \bI_m + \left(  \bM^T \hat{\bL}_p \bM \right)^{-1} \right) , 
%\label{eq:optimal-T-prb-reduction-temp3} \\ 
\left( \frac{1}{2} \right)^m \det \left[ \bI_m + \left(  \bM^T \hat{\bL}_p \bM \right)^{-1} \right] ,
%= 
%\left( \frac{1}{2} \right)^m \det \left[ \bI_m + \left(  \bM^T \bM - \bM^T \bL_p^{-1} \bM \right)^{-1} \right] , 
\label{eq:by_definition_of_determinant}\\
%\left( \frac{1}{2} \right)^m \det \left[ \bI_m + \left( \bI_m - \bM^T \bU_p^T \bU_p \bL_p^{-1} \bU_p^T \bU_p \bM \right)^{-1} \right] , 
%\label{eq:optimal-T-prb-reduction-temp4} \\
%& = & 
%2^{-m} \det \left[ 2 \bI_m + \left( \bU_p \bM \right)^T \left( \bP - \bU_p \bM  \left( \bU_p \bM \right)^T \right)^{-1} 
%\bU_p \bM \right]
%\label{eq:optimal-T-prb-reduction-temp5} \\
%& = & 
%\det \left[ \bI_m + \frac{1}{2} \bE \bD \bM^T \bU_p^T \left( \bL^{-1} \bF^T \left( \bS_x + \bS_e \right) \bF \bL^{-1} 
%- \bU_p \bM \bM^T \bU_p^T \right)^{-1} \bU_p \bM \bD^{-1} \bE^T \right]
%\label{eq:optimal-T-prb-reduction-temp6} \\
& = & 
\det \bigg[ \bI_m + \frac{1}{2} \bE \bD \bM^T \bU_p^T \bL^{-1} \bF^T \bF \bL \bL \bF^T
\nonumber\\
&& 
\left[ \bS_x + \bS_e - \bS_x \bT^T \bE \bD^{-2} \bE^T \bT \bS_x \right]^{-1}
\nonumber \\
&&
\bS_x \bT^T \bE \bD^{-2} \bE^T\bigg ]
\label{eq:optimal-T-prb-reduction-temp7} \\
%& = & 
%\det \left[ \bI_m + \frac{1}{2} \bT \bS_x \left[ \bS_x + \bS_e - \bS_x \bT^T \left( \bT \bS_x \bT^T \right)^{-1} \bT \bS_x \right]^{-1}
%\bS_x \bT^T \left( \bT \bS_x \bT^T \right)^{-1} \right]
%\label{eq:optimal-T-prb-reduction-temp8} \\
& = & 
J \left( \bT \right)  
\label{eq:substitute_and_calculate_J}
\end{eqnarray}

%where 
%(\ref{eq:optimal-T-prb-reduction-temp3}) follows since $\bM$ is full-rank, $\hat{\bL}_p$ is positive definite, and 
%thus $ \bM^T \hat{\bL}_p \bM$ is positive definite symmetric, 
%(\ref{eq:optimal-T-prb-reduction-temp4}) follows since $\bM$ is orthonormal and $\bU_p$ is unitary, 
%(\ref{eq:optimal-T-prb-reduction-temp5}) follows from using the SVD of $\bP$ and the matrix inversion lemma, 
%(\ref{eq:optimal-T-prb-reduction-temp6}) follows from using the properties of determinants and the definition of $\bP$,  
%(\ref{eq:optimal-T-prb-reduction-temp7}) follows from using (\ref{eq:optimal-T-prb-reduction-temp1}), 
%(\ref{eq:optimal-T-prb-reduction-temp8}) follows from the definition of $\bT$, the SVD of $\bS_x$ and using 
%(\ref{eq:optimal-T-prb-reduction-temp2}).  
where 
(\ref{eq:by_definition_of_determinant}) follows from the definition of determinant and properties of positive definite matrices; 
(\ref{eq:optimal-T-prb-reduction-temp7}) follows from our auxiliary definitions, properties of the defined matrices and the matrix inversion lemma;
(\ref{eq:substitute_and_calculate_J}) follows from the substitution of the auxiliary matrices in (\ref{eq:optimal-T-prb-reduction-temp7}). 
\end{proof}

\begin{lemma} 
For any  $\bT \in \cS_\bT$, there exists $\bM \in \cS_\bM$, such that $J \left( \bT \right) = G \left( \bM \right)$. 
\label{lem:optimal-T-prb-reduction-templemma2}
\end{lemma}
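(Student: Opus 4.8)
The plan is to prove this as the converse of Lemma~\ref{lem:optimal-T-prb-reduction-templemma}: instead of building a $\bT$ from a prescribed $\bM$, I would start from an arbitrary full-rank $\bT \in \cS_\bT$ and exhibit a single $\bM \in \cS_\bM$ (together with auxiliary factors $\bE$ and $\bD$) that places $\bT$ \emph{exactly} inside the parametrized family appearing in that lemma. Once $\bT$ is written in that canonical form, the identity $J(\bT) = G(\bM)$ is immediate, since it is precisely the content of the companion lemma.

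First I would right-multiply $\bT$ by the $n \times n$ matrix $\bF \bL \bU_p$ and set $\tbT \eqdef \bT \bF \bL \bU_p$. Because $\bF$ and $\bU_p$ are orthogonal and $\bL$ is positive-definite diagonal, $\bF \bL \bU_p$ is invertible with inverse $\bU_p^T \bL^{-1} \bF^T$; hence $\tbT$ remains an $m \times n$ matrix of rank $m$. The crucial step is then a thin singular value decomposition $\tbT = \bE \bD \bM^T$, with $\bE \in \bbR^{m\times m}$ unitary, $\bD \in \bbR^{m \times m}$ diagonal, and $\bM \in \bbR^{n \times m}$ carrying orthonormal columns. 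Undoing the right-multiplication yields
\[
\bT = \tbT \, \bU_p^T \bL^{-1} \bF^T = \bE \bD \bM^T \bU_p^T \bL^{-1} \bF^T,
\]
which is exactly the form hypothesized in Lemma~\ref{lem:optimal-T-prb-reduction-templemma} for these particular $\bE$, $\bD$, $\bM$. Applying that lemma then gives $J(\bT) = G(\bM)$, as claimed.

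The only genuine obstacle is the rank/orthogonality bookkeeping needed to certify that the decomposition delivers factors with the precise attributes required by $\cS_\bM$ and by the companion lemma. Specifically, I must verify that $\bF \bL \bU_p$ is truly invertible, that the $m$ singular values in $\bD$ are all strictly positive (so $\bD$ is positive-definite), and that $\bM^T \bM = \bI_m$, i.e. $\bM \in \cS_\bM$. Each of these reduces to the single fact that $r(\tbT) = r(\bT) = m$: full row rank of $\tbT$ forces the thin SVD to have $m$ nonzero singular values and orthonormal right factors, which is exactly what is needed. Once this observation is recorded, the conclusion follows directly by citing Lemma~\ref{lem:optimal-T-prb-reduction-templemma}, with no further computation of $\bW$ or $\hat{\bL}_p$ required.

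Together with Lemma~\ref{lem:optimal-T-prb-reduction-templemma}, this establishes that the maps $\bM \mapsto \bT$ and $\bT \mapsto \bM$ match the objective values $G$ and $J$ in both directions, so that $\max_{\bT \in \cS_\bT} J(\bT) = \max_{\bM \in \cS_\bM} G(\bM)$ and the construction of $\bT^\ast$ from any maximizer $\bM^\ast$ of the reduced problem is valid.
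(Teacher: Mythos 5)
Your proposal is correct and takes essentially the same route as the paper: the paper's proof takes the eigendecomposition $\bT \bS_x \bT^T = \tilde{\bE}\tilde{\bD}^2\tilde{\bE}^T$ and sets $\bM = \bU_p^T \bL \bF^T \bT^T \tilde{\bE}\tilde{\bD}^{-1}$, which is precisely the thin SVD $\bT\bF\bL\bU_p = \tilde{\bE}\tilde{\bD}\bM^T$ that you invoke, written out by hand (note $\left(\bT\bF\bL\bU_p\right)\left(\bT\bF\bL\bU_p\right)^T = \bT\bS_x\bT^T$). Both arguments then conclude identically by citing Lemma~\ref{lem:optimal-T-prb-reduction-templemma} on the resulting canonical form of $\bT$.
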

\begin{proof}
For any $\bT \in \cS_\bT$, let $\tilde{\bE}$ and $\tilde{\bD}$ be given by the SVD of $\bT \bS_x \bT^T$, i.e., 
$\bT \bS_x \bT^T = \tilde{\bE} \tilde{\bD}^2 \tilde{\bE}^T$. Naturally, $\tilde{\bE} \in \bbR^{m \times m}$
and $\tilde{\bD} \in \bbR^{m \times m}$ are unitary and positive-definite diagonal, respectively. Then let 
$\bM \eqdef \bU_p^T \bL \bF^T \bT^T \tilde{\bE} \tilde{\bD}^{-1}$. \\
First, we show that $\bM \in \cS_\bM$. Clearly, $\bM \in \bbR^{n \times m}$. Here, 
\begin{eqnarray}
\bM^T \bM 
& = &  
\bD^{-1} \tilde{\bE}^T \bT \bF \bL \bU_p \bU_p ^T \bL \bF^T \bT^T \tilde{\bE} \tilde{\bD}^{-1}
\nonumber \\
& = &
\bD^{-1} \tilde{\bE}^T \bT \bF \bL^2 \bF^T \bT^T \tilde{\bE} \tilde{\bD}^{-1}
\nonumber \\
 & = & 
\bD^{-1} \tilde{\bE}^T \bT \bS_x \bT^T \tilde{\bE} \tilde{\bD}^{-1} , 
\nonumber \\
& = & 
\bD^{-1} \tilde{\bE}^T \tilde{\bE} \tilde{\bD}^2 \tilde{\bE}^T \tilde{\bE} \tilde{\bD}^{-1}
= 
\bI_m , 
\nonumber 
\end{eqnarray}
implying $\bM \in \cS_\bM$. 
Now, note that we have $\bT = \tilde{\bE} \tilde{\bD} \bM^T \bU_p^T \bL^{-1} \bF^T$ due to the way $\bM$ was defined; 
this means $\bT$ is of the functional form given in the statement of Prop.~\ref{prop:optimal-T-prb-reduction}. 
Also,  $\tilde{\bE}$ is unitary and $\tilde{\bD}$ is diagonal and positive definite. 
Therefore, we necessarily have $G \left( \bM \right) = J \left( \bT \right)$ per 
Lemma~\ref{lem:optimal-T-prb-reduction-templemma}. Hence the proof. 
\end{proof}
Now, we go back to the proof of Prop.~\ref{prop:optimal-T-prb-reduction} and use proof by contradiction. 
Suppose, there exists some $\bar{\bT} \in \cS_\bT$ such that $J \left( \bar{\bT} \right) > J \left( \bT^\ast \right)$. 
By Lemma~\ref{lem:optimal-T-prb-reduction-templemma}, we necessarily have $J \left( \bT^\ast \right) = 
G \left( \bM^\ast \right)$. Furthermore, by Lemma~\ref{lem:optimal-T-prb-reduction-templemma2}, 
there exists $\bar{\bM} \in \cS_M$ such that $J \left( \bar{\bT} \right) = G \left( \bar{\bM} \right)$. But this implies 
$ G \left( \bar{\bM} \right) = J \left( \bar{\bT} \right)  > J \left( \bT^\ast \right) = G \left( \bM^\ast \right)$ which contradicts
with the way $\bM^\ast$ was defined in the first place. Hence contradiction and proof. 
 \qed

%%%%%%%%%%%%%%%%%%%%%%%%%%%%%%%%%%%%%%%%%%%%%%

\section{Proof of Proposition~\ref{prop:opt-reduced-soln}}
\label{prf:opt-reduced-soln}

First, observe that, $G \left( \bM \right)$ is a product of positive real numbers since $\lambda_i \left( \bM^T \hat{\bL}_p \bM \right)
> 0$ for all $i$ by the positive definiteness of $\bM^T \hat{\bL}_p \bM$ (because $\bM$ is orthonormal, full-rank and $\hat{\bL}_p$
is positive definite). So, in order to maximize $G \left( \bM \right)$, 
we follow the strategy of maximizing each positive factor $\left( 1 + \frac{1}{\lambda_i \left( \bM^T 
\hat{\bL}_p \bM \right) } \right)$ for all $i$, which clearly is equivalent to minimizing $\lambda_i \left( \bM^T \hat{\bL}_p 
\bM \right)$ for all $i$. Here, let $\bQ \in \bbR^{m \times m}$ denote a permutation matrix such that 
$\hat{\bL}_p = \bQ^T \hat{\hat{\bL}}_p \bQ$, the matrix $\hat{\hat{\bL}}_p$ is diagonal, 
and its eigenvalues (i.e., the diagonal entries)  are in non-decreasing order
\footnote{See \cite{HornJohn99} for the existence of such a $\bQ$. Note that, such a $\bQ$ is unique iff the eigenvalues 
of $\hat{\bL}_p$ are distinct.}. Then, $G \left( \bM \right)$ can be rewritten as 
\begin{equation}
G \left( \bM \right) \eqdef 2^{-m} \prod_{i=1}^m \left[ 1 + \frac{1}{\lambda_i 
\left( \bM^T \bQ^T \hat{\hat{\bL}}_p \bQ \bM \right) }
\right] .
\label{eq:opt-reduced-soln-temp1}
\end{equation}
Next, we recall the {\em Poincar\'e seperation theorem} (see \cite{HornJohn99}, pp. 190--191) which is crucial in completing the proof. 
\begin{theorem}
Let $\bA \in \bbR^{n\times n}$ be symmetric, and let $m$ be a given integer with $1\leq m \leq n$, and $\bB_m = \bU^T \bA \bU$, where $\bU \in \bbR^{n \times m}$ is orthonormal. If eigenvalues of $\bA$ and $\bB_m$ are arranged in non-decreasing order, we have
\be
\lambda_i \left( \bA \right) \leq \lambda_i \left( \bB_m \right) \leq \lambda_{i+n-m} \left( \bA \right) \hspace{0.08in} i = 1,2,...,m
\label{eq:Poincare}
\ee
\end{theorem}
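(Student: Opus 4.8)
The plan is to derive both inequalities in~(\ref{eq:Poincare}) from the Courant--Fischer variational characterization of the eigenvalues of a symmetric matrix, transported through the compression $\bU^T \bA \bU$ by a Rayleigh-quotient identity. Concretely, for a symmetric $\bA \in \bbR^{n \times n}$ with eigenvalues in non-decreasing order I would invoke the two equivalent (min--max and max--min) forms
\begin{eqnarray}
\lambda_i \left( \bA \right) & = & \min_{\substack{W \subseteq \bbR^n \\ \dim W = i}} \; \max_{\substack{\bx \in W \\ \bx \neq \bzero}} \frac{\bx^T \bA \bx}{\bx^T \bx}, \nonumber \\
\lambda_i \left( \bA \right) & = & \max_{\substack{W \subseteq \bbR^n \\ \dim W = n-i+1}} \; \min_{\substack{\bx \in W \\ \bx \neq \bzero}} \frac{\bx^T \bA \bx}{\bx^T \bx}, \nonumber
\end{eqnarray}
together with the analogous statements for the $m \times m$ matrix $\bB_m$ (these are standard and available in the same reference~\cite{HornJohn99}). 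The single algebraic fact coupling the two Rayleigh quotients is the following: for any $\by \in \bbR^m$, setting $\bx \eqdef \bU \by$, orthonormality of $\bU$ (i.e. $\bU^T \bU = \bI_m$) gives $\bx^T \bx = \by^T \bU^T \bU \by = \by^T \by$ and $\bx^T \bA \bx = \by^T \bU^T \bA \bU \by = \by^T \bB_m \by$, so the Rayleigh quotient of $\by$ with respect to $\bB_m$ equals that of $\bU \by$ with respect to $\bA$.

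Next I would exploit the geometry of the correspondence $\by \mapsto \bU \by$. Since $\bU$ has orthonormal columns it is an injective isometry, so it carries each $i$-dimensional subspace $T \subseteq \bbR^m$ onto an $i$-dimensional subspace $\bU T$ of the column space $\cR \left( \bU \right)$, and conversely every $i$-dimensional subspace of $\cR \left( \bU \right)$ arises this way. Combining this bijection with the Rayleigh-quotient identity lets me rewrite each eigenvalue of $\bB_m$ as the very same min--max (respectively max--min) of $\left( \bx^T \bA \bx \right) / \left( \bx^T \bx \right)$, except that the optimizing subspaces $W$ are now \emph{constrained} to lie inside $\cR \left( \bU \right)$ rather than ranging over all of $\bbR^n$.

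For the lower bound I would use the min--max form: $\lambda_i \left( \bB_m \right)$ minimizes $\max_{\bx \in W} \left( \bx^T \bA \bx \right) / \left( \bx^T \bx \right)$ over $i$-dimensional $W \subseteq \cR \left( \bU \right)$, whereas $\lambda_i \left( \bA \right)$ minimizes the same quantity over \emph{all} $i$-dimensional $W \subseteq \bbR^n$; restricting the feasible set of a minimization can only raise its value, giving $\lambda_i \left( \bA \right) \leq \lambda_i \left( \bB_m \right)$. For the upper bound I would use the max--min form: $\lambda_i \left( \bB_m \right) = \max \min \left( \bx^T \bA \bx \right) / \left( \bx^T \bx \right)$ over $\left( m - i + 1 \right)$-dimensional $W \subseteq \cR \left( \bU \right)$, while restricting the feasible set of a maximization can only lower its value, so it is bounded above by the same max--min taken over all $\left( m - i + 1 \right)$-dimensional $W \subseteq \bbR^n$.

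The step deserving the most care — and the one I view as the main obstacle — is the dimension bookkeeping in the max--min argument. One must pair the min--max form of $\lambda_i \left( \bB_m \right)$ with the min--max form of $\lambda_i \left( \bA \right)$ for the lower bound, but the max--min form of $\lambda_i \left( \bB_m \right)$ with the max--min form of $\lambda_{i+n-m} \left( \bA \right)$ for the upper bound, and then verify the identity $n - \left( i + n - m \right) + 1 = m - i + 1$ so that the two families of subspaces share the same dimension and differ only by the ambient restriction to $\cR \left( \bU \right)$. Once this matching is confirmed, the unrestricted max--min over $\bbR^n$ is exactly $\lambda_{i+n-m} \left( \bA \right)$, and the restriction-lowers-a-maximum principle yields $\lambda_i \left( \bB_m \right) \leq \lambda_{i+n-m} \left( \bA \right)$. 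Keeping the direction of each inequality straight (restricting a min raises it, restricting a max lowers it) then delivers~(\ref{eq:Poincare}) for all $i = 1, \ldots, m$.
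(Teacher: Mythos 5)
Your proof is correct, but there is nothing in the paper to compare it against step by step: the paper does not prove this theorem at all. It invokes the Poincar\'e separation theorem as a known result, citing Horn and Johnson, \emph{Matrix Analysis}, pp.~190--191, and uses it as a black box inside the proof of Proposition~\ref{prop:opt-reduced-soln}. What you have written is essentially the canonical Courant--Fischer argument found in that reference, and every load-bearing step checks out: the Rayleigh-quotient identity $\left( \bU \by \right)^T \bA \left( \bU \by \right) = \by^T \bB_m \by$ together with $\| \bU \by \| = \| \by \|$, both consequences of $\bU^T \bU = \bI_m$; the dimension-preserving bijection between $i$-dimensional subspaces of $\bbR^m$ and $i$-dimensional subspaces of $\cR \left( \bU \right)$, valid because $\bU$ is an injective isometry; the use of the min--max form for the lower bound and the max--min form for the upper bound, with the correct monotonicity (restricting the feasible set of a minimization raises it, of a maximization lowers it); and the index check $n - \left( i + n - m \right) + 1 = m - i + 1$, which is exactly what makes the unrestricted max--min over $\left( m - i + 1 \right)$-dimensional subspaces of $\bbR^n$ equal $\lambda_{i+n-m} \left( \bA \right)$. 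Your ordering convention (non-decreasing eigenvalues) also matches the paper's. The only thing your argument buys beyond the citation is self-containedness, which is a legitimate gain; conversely, the paper's choice to cite rather than prove is defensible since the result is standard and the paper only needs the upper-bound half, applied with $\bA = \hat{\bL}_p$, to obtain the bound on $G \left( \bM \right)$ in the proof of Proposition~\ref{prop:opt-reduced-soln}.
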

\noindent \\
Using (\ref{eq:Poincare}) in (\ref{eq:opt-reduced-soln-temp1}), we get 
\begin{equation}
G \left( \bM \right) \leq 2^{-m} \prod_{i=1}^m \left[ 1 + \frac{1}{\lambda_i \left( \hat{\hat{\bL}}_p \right) } \right] . 
\label{eq:opt-reduced-soln-temp2}
\end{equation}
Choosing $\bQ \bM = \left[ \bI_m \; \bzero_{ m \times \left( n - m \right) } \right]^T$ clearly satisfies 
$\lambda_i  \left( \bM^T \bQ^T \hat{\hat{\bL}}_p \bQ \bM \right) = \lambda_i \left( \hat{\hat{\bL}}_p \right)$ for $1 \leq i \leq m$, 
thereby achieving (\ref{eq:opt-reduced-soln-temp2}) with equality. Furthermore, since eigenvalues are invariant under 
similarity transformations, for any unitary $\bGam \in \bbR^{m \times m}$
choosing $\bQ \bM = \left[ \bGam_{m \times m}^T  \; \bzero_{ m \times \left( n - m \right) } \right]^T$
also satisfies (\ref{eq:opt-reduced-soln-temp2}) with equality.
Also, the resulting \\
$\bM = \bQ^T  \left[ \bGam_{m \times m}^T  \; \bzero_{ m \times \left( n - m \right) } \right]^T$ clearly 
satisfies $\bM \in \cS_\bM$. Hence, any such $\bM$ is a solution to (\ref{eq:optimal-reduct-M}) where the maximum value is the %right hand side 
RHS of  (\ref{eq:opt-reduced-soln-temp2}). 
\qed

\section*{Acknowledgement}
Authors wish to thank Tamer Ba\c{s}ar, Vishal Monga, Sviatoslav Voloshynovskiy, Oleksiy Koval, Serdar Kozat 
and Serdar Y\"{u}ksel for various helpful discussions and comments.

\end{document}